\documentclass[envcountsame]{llncs}

\usepackage[english]{babel}
\usepackage[utf8]{inputenc}
\usepackage{graphicx,subcaption,url,hyperref,multirow,xspace,color} 
\usepackage{amsmath,amssymb,clrscode,thmtools,thm-restate,microtype,verbatim}
\usepackage{algorithm}
\usepackage[noend]{algorithmic}
\usepackage[makeroom]{cancel}
\usepackage{csquotes}
\usepackage{siunitx}

\usepackage{pdflscape}

\newcommand{\C}{\ensuremath{\mathcal{C}}\xspace}

\renewcommand{\S}{\ensuremath{\mathcal{S}}\xspace}
\newcommand{\T}{\ensuremath{\mathcal{T}}\xspace}
\newcommand{\F}{\ensuremath{\mathcal{F}}\xspace}

\graphicspath{{figures/}}

\newcommand{\fd}{flow diagram\xspace}
\newcommand{\Fd}{Flow diagram\xspace}
\newcommand{\FD}{Flow Diagram\xspace}



\title{Compact \FD{}s for State Sequences}
\author{Kevin Buchin \and Maike Buchin \and Joachim Gudmundsson \and \\Michael
Horton \and Stef Sijben}
\institute{}


\pagestyle{plain}
\begin{document}
\maketitle

\begin{abstract}
We introduce the concept of compactly representing a large number of state
sequences, e.g., sequences of activities, as a \fd{}.
We argue that the \fd{} representation
gives an intuitive summary that allows the user to detect patterns among large
sets of state sequences. Simplified, our aim is to generate a small \fd{}
that models the flow of states of all the state sequences given as input.
For a small number of state sequences we present efficient algorithms to compute
a minimal \fd{}.
For a large number of state sequences we show that it is unlikely that efficient
algorithms 
exist. More specifically, the
problem is $W[1]$-hard if the number of state sequences is taken as a
parameter. We thus introduce several heuristics for this problem.
We argue about the
usefulness of the \fd{} by applying the algorithms to two problems in sports
analysis. We evaluate the performance of our algorithms on a football data
set and generated data.
\end{abstract}

\section{Introduction}
Sensors are tracking the activity and movement of an increasing number of objects, generating large data sets in many application domains, such as sports analysis, traffic analysis and behavioural ecology.
This leads to the question of how large sets of sequences of activities can be represented compactly. We introduce the concept of representing the ``flow" of activities in a compact way and argue that this is helpful to detect patterns in large sets of state sequences. 

To describe the problem we start by giving a simple example. Consider three objects (people) and their sequences of states, or activities, during a day. The set of state sequences $\T=\{\tau_1, \tau_2, \tau_3\}$ are shown in Fig.~\ref{fig:ex}(a).
As input we are also given a set of criteria $\C=\{C_1, \ldots, C_k\}$, as listed in Fig.~\ref{fig:ex}(b). Each criterion is a Boolean function on a single subsequence of states, or a set of subsequences of states.
For example, in the given example the criterion $C_1=$“eating” is true for Person~1 at time intervals 7--8am and 7--9pm, but false for all other time intervals.
Thus, a criterion partitions a sequence of states into subsequences, called \emph{segments}.
In each segment the criterion is either true or false. A \emph{segmentation}
of $\T$ is a partition of each sequence in $\T$ into true segments, which is represented by the corresponding sequence of criteria.
If a criterion $C$ is true for a set of subsequences, we say they \emph{fulfil} $C$.
Possible segments of $\T$ according to the set $\C$ are shown in~Fig.~\ref{fig:ex}(c). The aim is to summarize segmentations of all sequences efficiently; that is, build a \fd{} $\F$, starting at a start state $s$ and ending at an end state $t$, with a small number of nodes such that for each sequence of states $\tau_i$, $1\leq i\leq m$, there exists a segmentation according to $\C$ which appears as an $s$--$t$ path in $\F$.
A possible \fd{} is shown in Fig.~\ref{fig:ex}(d). This \fd{} for $\T$ according to $\C$ can be validated  by going through a segmentation of each object while following a path in $\F$ from $s$ to $t$. For example, for Person~1 the $s$--$t$ path $s\rightarrow C_1 \rightarrow C_2 \rightarrow C_4 \rightarrow C_1 \rightarrow t$ is a valid segmentation.

\begin{figure}[tb]
\includegraphics[width=\textwidth]{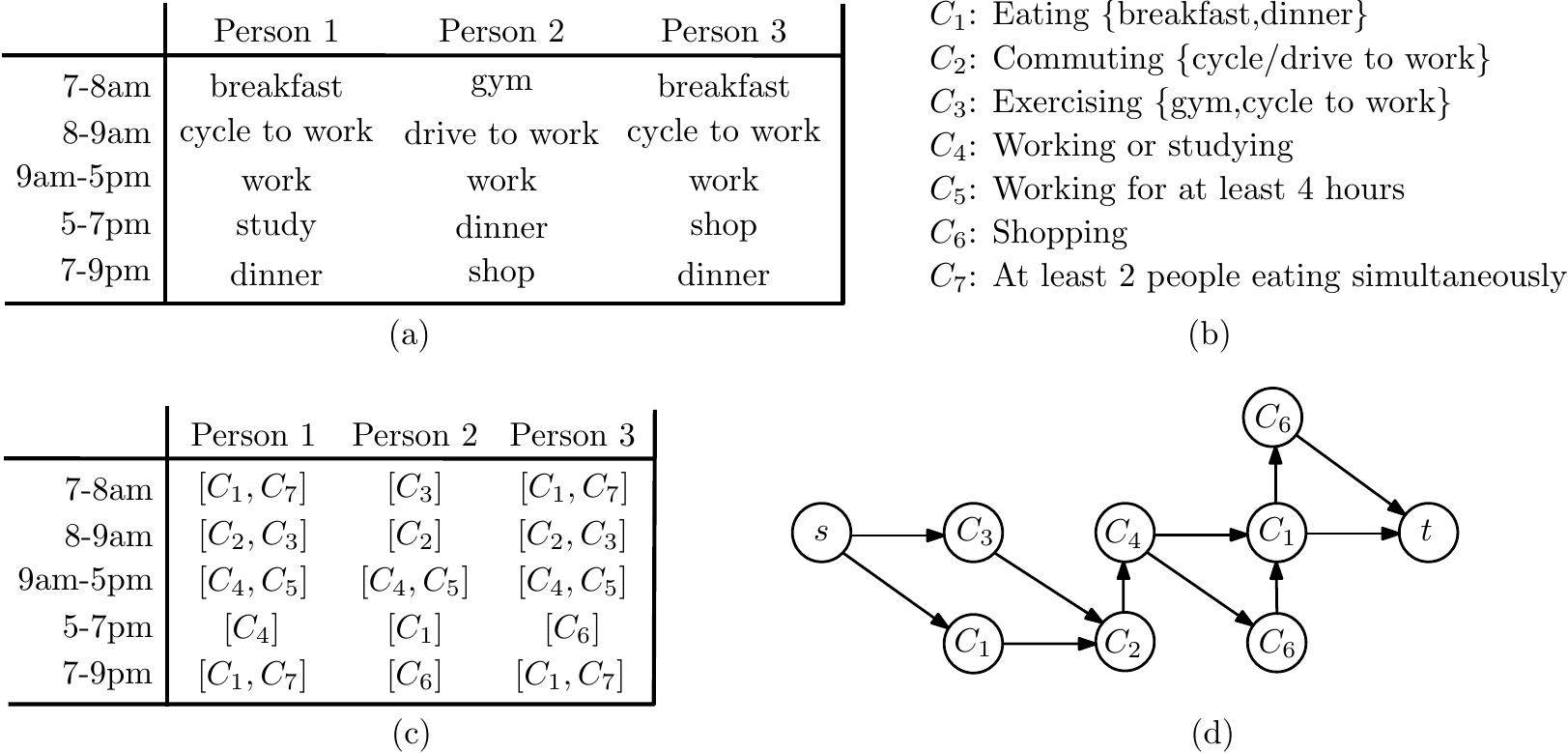}
\caption{The input is (a) a set $\T =\{\tau_1, \ldots, \tau_m\}$ of sequences of states and (b) a set of criteria $\C=\{C_1, \ldots, , C_k\}$. (c) The criteria partition the states into a segmentation. (d) A valid \fd{} for $\T$ according to $\C$.}
\label{fig:ex}
\end{figure}

Now we give a formal description of the problem. A \emph{\fd{}} is a node-labelled DAG containing a source node $s$ and sink node $t$, and where all other nodes are labelled with a criterion.
Given a set $\mathcal{T}$ of sequences of states and set of criteria $\C$, the goal is to construct a \fd{} with a minimum number of nodes, such that a segmentation of each sequence of states in $\mathcal{T}$ is represented, that is, included as an $s$--$t$ path, in the \fd{}.
Furthermore (when criteria depend on multiple state sequences, e.g.\ $C_7$ in Fig.~\ref{fig:ex}) we require that the segmentations represented in the \fd{} are consistent, i.e.\ can be jointly realized.
The \emph{\FD{} problem} thus requires the segmentations of each sequence of states and the minimal \fd{} of the segmentations to be computed. It can be stated as:

\begin{problem}
    \FD{} (FD) \\
\noindent {\bf Instance:} A set of sequences of states $\T = \{\tau_1, \ldots, \tau_m\}$, each of length at most $n$, a set of criteria $\C = \{C_1,\ldots,C_k\}$ and an integer $\lambda > 2$.\\
\noindent {\bf Question:} Is there a \fd{} $\F$ with $\leq\lambda$ nodes, such that for each $\tau_i\in\T$, there exists a segmentation according to $\C$ which appears as an $s$--$t$ path in $\F$?
\end{problem}

Even the small example above shows that there can be considerable space savings by representing a set of state sequences as a \fd{}. This is not a lossless representation and comes at a cost.
The \fd{} represents the sequence of flow between states, however, the information about an individual sequence of states is lost. As we will argue in Section~\ref{sec:experiments}, paths representing many segments in the obtained \fd{}s show interesting patterns. 
We will give two examples. First we consider segmenting the morphology of formations of a defensive line of football players during a match (Fig.~\ref{fig:flow_diagram_defence_formation}). The obtained \fd{} provides an intuitive summary of these formations. 
The second example models attacking possessions as state sequences. The summary given by the \fd{} gives intuitive information about differences in attacking tactics. 

\subsubsection*{Properties of Criteria.}
The efficiency of the algorithms will depend on properties of the criteria on which the segmentations are based. Here we consider four cases: (i) general criteria without restrictions; (ii) monotone decreasing and independent criteria; (iii) monotone decreasing and dependent criteria; and (iv) fixed criteria. To illustrate the properties we will again use the example in Fig.~\ref{fig:ex}.

A criterion $C$ is \emph{monotone decreasing}~\cite{bdks-stfa-11} for a given sequence of states $\tau$ that fulfils $C$, if all subsequences of $\tau$ also fulfil $C$. For example, if $C_4$ is fulfilled by a sequence $\tau$ then any subsequence $\tau'$ of $\tau$ will also fulfil $C_4$. This is in contrast to criterion $C_5$ which is not monotone decreasing.

A criterion $C$ is \emph{independent} if checking whether a subsequence $\tau'$ of a sequence $\tau_i \in \mathcal{T}$ fulfils $C$ can be achieved without reference to any other sequences $\tau_j \in \mathcal{T}, i \neq j$.
Conversely, $C$ is \emph{dependent} if checking that a subsequence $\tau'$ of $\tau_i$ requires reference to other state sequences in $\mathcal{T}$.
In the above example $C_4$ is an example of an independent criterion while $C_7$ is a dependent criterion since it requires that at least two objects fulfil the criterion at the same time.


\subsubsection*{Related work.}
To the best of our knowledge compactly representing sequences of states as \fd{}s has not been considered before. The only related work we are aware of comes from the area of trajectory analysis. Spatial trajectories are a special case of state sequences. A spatial trajectory describes the movement of an object through space over time, where the states are location points, which may also include additional information such as heading, speed, and temperature. For a single trajectory a common way to obtain a compact representation is \emph{simplification}~\cite{cwt-stdrd-06}. Trajectory simplification asks to determine a subset of the data that represents the trajectory well in terms of the location over time. 
If the focus is on characteristics other than the location, then \emph{segmentation}~\cite{abbkkw-tssc-14,adkls-stnmc-13,bdks-stfa-11} is used to partition a trajectory into a small number of subtrajectories, where each subtrajectory is homogeneous with respect to some characteristic. 
This allows a trajectory to be compactly represented as a sequence of characteristics.

For multiple trajectories other techniques apply. A large set of trajectories might contain very unrelated trajectories, hence \emph{clustering} may be used. Clustering on complete trajectories will not represent information about interesting parts of trajectories; for this clustering on subtrajectories is needed~\cite{bbgll-dcpcs-11,hjzs-scabs-11}. 
A set of trajectories that forms different groups over time may be captured by a \emph{grouping structure}~\cite{bbkss-tgs-13}. These approaches also focus on location over time.

For the special case of spatial trajectories, a \fd{} can be illustrated by a simple example: trajectories of migrating geese, see~\cite{bkk-stmb-12}. 
The individual trajectories can be segmented into phases of activities such as directed flight, foraging and stop overs. 
This results in a \fd{} containing a path for the segmentation of each trajectory. 
More complex criteria can be imagined that depend on a group of geese, or frequent visits to the same area, resulting in complex state sequences that are hard to analyze without computational tools.

\subsubsection*{Organization}
In Section~\ref{sec:Algorithms} we present algorithms for the \FD{} problem using criteria with the properties described above.
These algorithms only run in polynomial time if the number of state sequences $m$ is constant. 
Below we observe that this is essentially the best we can hope for by showing that the problem is $W[1]$-hard.
Both theorems are proved in Section~\ref{sec:Hardness}.
Unless $W[1]=FPT$, this rules out the existence of algorithms with time complexity of $O(f(m) \cdot (nk)^c)$ for some constant~$c$, where $m,n$ and $k$ are the number of state sequences, the length of the state sequences and the number of criteria, respectively. 
To obtain \fd{}s for larger groups of state sequences we propose two heuristics for the problem in Section~\ref{sec:Algorithms}. We experimentally evaluate the algorithms and heuristics in Section~\ref{sec:experiments}.

\section{Hardness Results}
\label{sec:Hardness}

In this section, the following hardness results are proven.

\begin{theorem}\label{thm:hard}
  The FD problem is NP-hard. This even holds when only two criteria are used or when the length of every state sequence is $2$. Furthermore, for any $0 < c < 1/4$, the FD problem cannot be approximated within factor of $c \log m$ in polynomial time unless $NP\subset DTIME(m^{\operatorname{polylog} m})$.
\end{theorem}

Also for bounded $m$ the running times of our algorithms is rather high. Again, we can show that there are good reasons for this.

\begin{theorem}\label{thm:w1}
  The FD problem parameterized in the number of state sequences is $W[1]$-hard even when the number of criteria is constant.
\end{theorem}

To obtain the stated results we will perform two reductions; one from the Shortest Common Supersequence problem and one from the Set Cover problem.

\subsection{Reduction from SCS}

\begin{problem}
  Shortest Common Supersequence (SCS) \\
\noindent {\bf Instance:} A set of strings $R = \{r_1, r_2, \ldots, r_k\}$ over an alphabet $\Sigma$, a positive
integer $\lambda$.\\
\noindent {\bf Question:} Does there exist a string $s \subset \Sigma^*$ of length at most $\lambda$, that
is a supersequence of each string in $R$?
\end{problem}

The SCS problem has been extensively studied over the last 30 years (see~\cite{fi-aascs-95} and references therein). Several hardness results are known, we will use the following two.

\begin{lemma}[Pietrzak~\cite{p-pcfas-03}]\label{fact:scs-hard-1}
 The Shortest Common Supersequence problem parameterized in the number of Strings is $W[1]$ hard even when the alphabet has constant size.
\end{lemma}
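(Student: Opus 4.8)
Although this lemma is attributed to Pietrzak and will presumably be invoked as a black box, the plan to re-derive it would be a parameterized reduction from a standard $W[1]$-hard problem — \textsc{Multicolored Clique}: given a graph $G$ whose vertices are partitioned into colour classes $V_1,\dots,V_j$, decide whether some clique of $G$ contains exactly one vertex of each colour, parameterized by $j$. The target is \textsc{Scs} over a \emph{fixed} alphabet, and the reduction must output only $k=f(j)$ strings, so that the new parameter depends on $j$ alone. I would fix $\Sigma=\{0,1,\#\}$, encode each vertex of $G$ by an $\ell$-bit block with $\ell=\lceil\log|V(G)|\rceil$, and choose the length bound $\lambda$ so that every common supersequence $S$ of length at most $\lambda$ is forced into a rigid template of $j$ consecutive ``slots'' separated by $\#$'s, the $i$-th slot being required to spell out the code of a single vertex of $V_i$.

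The construction would use two kinds of gadget strings. A small family of \emph{selection strings} for each colour $i$, whose common supersequences of length at most $\lambda$ are exactly the words whose $i$-th slot holds a legal $V_i$-code; the mechanism is to pit ``supersequence'' against ``short'', so that fitting every selection string inside the budget forces a commitment to one vertex per colour. And, for every pair $i<i'$, a constant number of \emph{edge strings} that embed into $S$ within the budget precisely when the vertices chosen in slots $i$ and $i'$ are adjacent in $G$; a non-edge would instead force $S$ past the bound $\lambda$. With $O(1)$ selection strings per colour and $O(1)$ edge strings per pair this yields $k=O(j^2)$ strings over a three-letter alphabet, all of size polynomial in $|G|$, hence a parameterized (\textsf{fpt}) reduction. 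I would then verify the two implications: from a multicolored $j$-clique, reading the chosen vertices into the slots produces a common supersequence of all gadget strings of length at most $\lambda$; conversely, any such supersequence decodes, by the selection gadgets, to one vertex per colour and, by the edge gadgets, to a pairwise-adjacent set — a multicolored clique.

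The hard part will be realizing this under the two simultaneous restrictions — a constant alphabet \emph{and} only $f(j)$ strings. With an unbounded alphabet one could spend a separate letter on each vertex and the slot/selection structure would be visible letter by letter, trivializing the gadgets; but binary codes have no self-evident block boundaries, so the lengths of the gadget strings and the bound $\lambda$ must be tuned so tightly that no ``illegal'' slot content — a code of the wrong colour, two codes crammed into one slot, a code drifting across a separator, a partially filled slot, and so on — can ever be packed into $\lambda$ symbols, while still keeping the number of strings independent of $n$. I expect essentially all of the work to sit in pinning down these length counts and proving that the only way to meet the bound is the intended one; granting that, the correctness argument above is routine.
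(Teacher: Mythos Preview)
You are right that the paper invokes this lemma as a black box: it is stated with attribution to Pietrzak and no proof is supplied; the paper's own contribution is the subsequent reduction from \textsc{SCS} to \textsc{FD}. So there is nothing in the paper to compare your re-derivation against.

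On the sketch itself: the high-level plan --- reduce from \textsc{Multicolored Clique}, lay out $j$ separator-delimited slots, use selection gadgets to pin one vertex per colour and edge gadgets for adjacency --- is a sound template for $W[1]$-hardness, and you correctly isolate the real obstacle, namely making the selection gadget work with only $O(1)$ strings over a constant alphabet. That obstacle is not a detail: a fixed handful of binary strings cannot, by length-counting alone, single out an arbitrary polynomial-size set of ``legal'' $V_i$-codes, so the slot discipline you need would have to emerge from the interaction of the selection and edge strings together with a very carefully chosen $\lambda$. Pietrzak's actual argument is organised differently and does not go through \textsc{Multicolored Clique} with this slot structure, so carrying your plan through would amount to a new proof rather than a reconstruction of his; that is legitimate, but the gadgetry you leave as ``the hard part'' is essentially the entire technical content of the result.
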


\begin{lemma}[R\"aih\"a and E. Ukkonen~\cite{ru-scspb-81}]\label{fact:scs-hard-2}
 The Shortest Common Supersequence problem over a binary alphabet is NP-complete.
\end{lemma}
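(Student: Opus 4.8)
The plan is to prove NP-completeness in the usual two parts: membership in NP, and a polynomial-time reduction from a known NP-complete problem, for which Vertex Cover is the natural source. Membership is immediate: concatenating all input strings is already a common supersequence, so we may assume $\lambda \le \sum_i |r_i|$ (otherwise the instance is trivially a \emph{yes}-instance), guess a binary string $s$ of length at most $\lambda$, and verify in polynomial time by the standard greedy scan that every $r_i$ is a subsequence of $s$.

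For hardness, given a Vertex Cover instance $(G,K)$ with $G=(V,E)$ and $V=\{1,\dots,n\}$, I would construct a family of binary strings together with a bound $\lambda$ so that a common supersequence of length at most $\lambda$ exists iff $G$ has a vertex cover of size at most $K$. The essential difficulty is that a two-letter alphabet cannot name the vertices by distinct symbols, so vertex identity has to be recorded positionally. The standard device is to force every near-optimal common supersequence into a canonical block form $B_1 B_2 \cdots B_n$, where $B_i$ is a fixed ``skeleton'' word over $\{0,1\}$ that may optionally carry a short extra gadget; padding block $B_i$ with that gadget corresponds to taking vertex $i$ into the cover and costs a fixed number of additional symbols. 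A few global ``frame'' strings are added that embed only into words of exactly this block shape, and for each edge $\{i,j\}$ one adds an ``edge string'' that threads through the skeleton but can be embedded only when at least one of $B_i$, $B_j$ is padded (the edge string asks for a symbol that a block provides only while it carries its gadget). Taking $\lambda$ to be the length of the bare skeleton plus $K$ times the per-vertex padding cost, a supersequence within budget exists precisely when at most $K$ blocks are padded and every edge string is satisfied, i.e. precisely when $G$ has a vertex cover of size $\le K$. All strings are binary and clearly computable in polynomial time, and the ``cover $\Rightarrow$ supersequence'' direction is routine: pad exactly the cover blocks and check each input embeds.

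The delicate direction — and the main obstacle — is the converse normalization: one must show that \emph{any} common supersequence of length $\le \lambda$ can be rewritten, without increasing its length, into the canonical block form, so that a short supersequence genuinely corresponds to ``at most $K$ padded blocks with all edges covered.'' This needs an exchange/shifting argument: symbols sitting ``out of place'' relative to the frame strings can always be moved or deleted at no extra cost, and the frame and edge strings must be rigid enough to pin down the block boundaries and the meaning of a ``padded'' block. Getting the frame strings to enforce this rigidity while keeping the alphabet binary and the budget tight is exactly the technical core of the R\"aih\"a--Ukkonen construction; everything else is bookkeeping. (An alternative route — reducing from SCS over a larger alphabet by binary-encoding the symbols — is tempting but not free, since a subsequence of the encoded strings need not respect symbol-block boundaries, so one would again need separator gadgets and a normalization argument of comparable difficulty.)
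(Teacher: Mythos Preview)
The paper does not prove this lemma at all: it is stated as a cited fact from R\"aih\"a and Ukkonen and used as a black box in the subsequent reduction to the FD problem. So there is no ``paper's own proof'' to compare against; the intended content here is simply the citation.

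That said, your sketch is a fair outline of what the original R\"aih\"a--Ukkonen argument actually does: a reduction from Vertex Cover, with vertex identities encoded positionally by a rigid block skeleton, frame strings to pin down the block structure, and per-edge strings that force at least one endpoint block to be padded. You also correctly flag the genuinely hard direction, namely the normalization showing that any short common supersequence can be massaged into canonical block form without length increase. What you have written, however, is a plan rather than a proof: the concrete choice of skeleton, frame, and edge gadgets over $\{0,1\}$, the exact budget arithmetic, and the exchange argument are all deferred, and these are precisely the places where a binary-alphabet construction can go wrong. For the purposes of this paper a citation suffices; if you want a self-contained proof you would need to actually specify the gadgets and carry out the normalization lemma.
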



Given an instance $I=(R=\{r_1, \ldots, r_m\}, \Sigma)$ of SCS construct an instance of FD as follows.
Each character $c_l$ in the alphabet $\Sigma$ corresponds to a criterion $c_l$. Each string $r_i$ corresponds to a state sequence $T_i$, where $T_i[j]=c_{r_i[j]}$. Thus at any step $T_i$ fulfils exactly one criterion.

An algorithm for FD given an instance outputs a \fd{} $F$ of size~$f$. Given $F$ one can compute a linear sequence $b$ of the vertices of $F$ using topological sort, as shown in Fig.~\ref{subfig:reduction-scs}. The linear sequence $b$ has $f-2$ vertices (omitting the start and end state of $F$) and it is a supersequence of each string in $R$. It follows that the size of $F$ is $\lambda$ if the number of characters in the SCS of $I$ has length $\lambda-2$. Note that $F$ contains a linear sequence of vertices (after topological sort), which correspondence to a supersequence, and a set of directed edges. Consequently a solution for the FD problem can easily be transformed to a solution for the SCS problem but not vice versa.

From the above reduction, together with Lemmas~\ref{fact:scs-hard-1}-\ref{fact:scs-hard-2}, we obtain Theorem~\ref{thm:w1} and the following. 


\begin{lemma}\label{lem:np2crit}
  The FD problem is NP-hard even for two criteria.
\end{lemma}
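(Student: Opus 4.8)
The plan is to leverage the reduction from SCS already described in the excerpt, but applied to the case of a \emph{binary} alphabet, where Lemma~\ref{fact:scs-hard-2} gives NP-completeness. First I would observe that the reduction constructs one criterion per alphabet symbol, so if $|\Sigma| = 2$ then the resulting FD instance uses exactly two criteria. The bulk of the argument is just verifying that the reduction is a correct polynomial-time many-one reduction: given an SCS instance $(R, \Sigma)$ with $|\Sigma| = 2$ and bound $\lambda$, we build state sequences $T_1, \ldots, T_m$ with $T_i[j] = c_{r_i[j]}$ and ask whether there is a \fd{} with at most $\lambda + 2$ nodes.

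The key steps, in order: (1) Describe the construction explicitly for the binary case — two criteria $c_0, c_1$, and each string becomes a state sequence of the same length over these two ``symbols''; note each state fulfils exactly one criterion, so the only segmentation of $T_i$ is the trivial one where every state is its own segment, i.e. the segmentation \emph{is} the string $r_i$ itself. (2) Show the forward direction: from a common supersequence $s$ of length $\le \lambda$, build a path \fd{} $s \to c_{s[1]} \to c_{s[2]} \to \cdots \to c_{s[\lambda]} \to t$ with $\lambda + 2$ nodes; since $s$ is a supersequence of each $r_i$, the sequence $r_i$ embeds as a subsequence of the labels, hence as an $s$--$t$ path. (3) Show the reverse direction: given a \fd{} $F$ with $\le \lambda + 2$ nodes representing a segmentation of each $T_i$, apply a topological sort to the internal (non-$s$, non-$t$) nodes to obtain a linear sequence $b$ of length $\le \lambda$; each $r_i$, being realized as an $s$--$t$ path in $F$, appears as a subsequence of $b$ (paths in a DAG respect the topological order), so $b$ is a common supersequence of $R$ of length $\le \lambda$. (4) Conclude that the FD instance is a yes-instance iff the SCS instance is, and since $|\Sigma| = 2$ forces exactly two criteria, NP-hardness of SCS over a binary alphabet transfers to FD restricted to two criteria.

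The main obstacle — really the only subtle point — is the reverse direction: I need to argue carefully that representing a segmentation of $T_i$ as an $s$--$t$ path in $F$ genuinely forces $r_i$ to be a subsequence of the topologically sorted label sequence $b$. This relies on two facts: first, that the unique segmentation of $T_i$ into single-state segments means the criterion-label sequence along the path is exactly $r_i$ (up to the symbol-to-criterion renaming); second, that because $F$ is a DAG and $b$ is a topological order, the labels encountered along \emph{any} directed path appear in $b$ in that same relative order, possibly with gaps — which is precisely the definition of subsequence. I would also note the asymmetry the excerpt flags: an arbitrary FD solution may have extra edges beyond a simple path, but this only helps (more paths are available), and for the reverse transformation we only ever extract the linear order, so the extra edges are harmless.

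A minor bookkeeping point to handle: the FD problem as stated requires $\lambda > 2$, so I would assume without loss of generality that the SCS bound satisfies $\lambda \ge 1$ (equivalently the FD bound is $\ge 3$), which is the only interesting regime anyway. With that, the reduction is complete and runs in time polynomial in the SCS instance size, so Lemma~\ref{fact:scs-hard-2} yields the claim.
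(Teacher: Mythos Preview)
Your approach is exactly the paper's: specialise the SCS reduction to a binary alphabet and invoke Lemma~\ref{fact:scs-hard-2}. The reverse direction (topologically sort $F$ to extract a common supersequence) is handled correctly and is in fact more carefully argued than in the paper.

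One small gap in your forward direction: the \fd{} you build, written as the bare path $s \to c_{s[1]} \to \cdots \to c_{s[\lambda]} \to t$, has only \emph{one} $s$--$t$ path, namely the full path of length $\lambda$. A string $r_i$ with $|r_i| < \lambda$ cannot be segmented into $\lambda$ non-empty pieces, so its (unique) segmentation does not appear as an $s$--$t$ path in this DAG. The fix is to add all forward edges (from $s$ to every internal node, from every internal node to $t$, and from $c_{s[i]}$ to $c_{s[j]}$ for all $i<j$); then every subsequence of $s$ is realised as an $s$--$t$ path, and the node count is unchanged. Your sentence ``embeds as a subsequence of the labels, hence as an $s$--$t$ path'' silently assumes these skip edges are present.

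A second minor point: the inference ``each state fulfils exactly one criterion, so the only segmentation is the trivial one'' is not automatic---if the criterion $c_a$ were defined to hold on any segment all of whose states are $a$-states, then runs of identical characters could be merged, and over a binary alphabet this would collapse every string to an alternating one, trivialising the problem. The reduction is free to \emph{choose} the criteria, so you should stipulate that $c_l$ holds only on single-state segments labelled $l$; then your uniqueness claim is correct. The paper's phrase ``at any step $T_i$ fulfils exactly one criterion'' is making the same implicit choice.
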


\begin{figure}[tb]
\centering
  \begin{subfigure}[b]{0.3\linewidth}
    \centering
    \includegraphics[page=1]{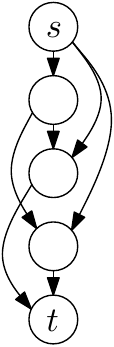}
    \caption{}
    \label{subfig:reduction-scs}
  \end{subfigure}
  \begin{subfigure}[b]{0.5\linewidth}
    \centering
    \includegraphics[page=2]{Reduction}
    \caption{}
    \label{subfig:reduction-sc}
  \end{subfigure}
\caption{Examples of \fd{}s produced by the reductions: (\subref{subfig:reduction-scs}) From \textsc{Shortest Common Supersequence}. (\subref{subfig:reduction-sc}) From \textsc{Set Cover}.}
\label{fig:Reduction}
\end{figure}

\subsection{Reduction from Set Cover}

\begin{problem}
  Set Cover (SC) \\
\noindent {\bf Instance:} A set of elements $E = \{e_1, e_2, \ldots , e_m\}$, a set of $n$ subsets
of $E$, $S = \{S_1, S_2, \ldots , S_n\}$ and a positive integer $\lambda$.\\
\noindent {\bf Question:} Does there exist set of $\lambda$ items in $S$ whose union equals $E$?
\end{problem}

Set Cover is well known to be NP-hard, and also hard to approximate:

\begin{lemma}[Lund and Yannakakis~\cite{ly-hamp-93}] \label{fact:sc-hard-1} 
 For any $0 < c < 1/4$, the Set Covering problem cannot be approximated within factor of $c \log m$ in polynomial time unless $NP \subset DTIME(m^{polylog m})$.
\end{lemma}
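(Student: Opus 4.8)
The plan is to obtain this lower bound by composing a hardness-amplified \emph{two-prover one-round} proof system for an NP-complete problem with a combinatorial gadget called a \emph{partition system}; this is the route taken by Lund and Yannakakis~\cite{ly-hamp-93}. First I would start from a gap version of a constraint satisfaction problem --- a \textsc{Label Cover} instance guaranteed by the PCP theorem --- in which, on \textsc{Yes} instances some labelling satisfies all constraints, while on \textsc{No} instances every labelling satisfies at most a $\gamma$-fraction of them for a fixed $\gamma<1$. In two-prover language: a verifier sends one question to each prover and accepts iff the two answers are consistent, with completeness $1$ and soundness bounded away from $1$.

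Second, I would amplify the soundness gap by $\ell$-fold repetition, pushing the soundness error down to about $\gamma^{\Omega(\ell)}$ while the numbers of distinct questions and answers blow up to roughly $q^{\ell}$ and $a^{\ell}$. This repetition is exactly where the quasi-polynomial cost enters: to make the final covering gap grow like $\log m$ one must take $\ell=\Theta(\log\log N)$, where $N$ is the size of the starting instance, so the produced \textsc{Set Cover} instance has size $N^{\operatorname{polylog} N}$. Consequently the reduction yields only the conditional conclusion $NP\subseteq DTIME(m^{\operatorname{polylog} m})$, rather than $P=NP$.

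Third, I would build the partition system: a ground set of $m$ points equipped with many distinct partitions, each into $h$ blocks, designed so that the whole ground set is covered cheaply (by the $h$ blocks of a \emph{single} partition) but cannot be covered by fewer than $\Omega(h\log m)$ blocks if one is forced to take blocks from many \emph{different} partitions. A random partition system has this property with high probability, which I would verify by a union bound over all small candidate covers. I would then compose the two ingredients: place one copy of the partition system per left-question of the verifier, and, for each prover answer, introduce a set that covers the block indexed by that answer in every copy where it is relevant. Consistent answer pairs are arranged to cover complementary blocks, so that a single aligned partition is selected exactly when the provers agree.

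The gap then follows from the two regimes. On \textsc{Yes} instances a consistent prover strategy selects, per question, one answer, and the corresponding sets cover each copy with its $h$ aligned blocks, for a total of about (number of questions)$\cdot h$ sets. On \textsc{No} instances soundness forces the chosen answers to disagree on all but a $\gamma^{\Omega(\ell)}$ fraction of question pairs, so in almost every copy the selected blocks come from many partitions and the partition-system property forces $\Omega(h\log m)$ sets per copy; the optimum is then larger by a factor of $\Theta(\log m)$, and tuning $h$, $\ell$ and $m$ pins the ratio above any $c\log m$ with $c<1/4$. The main obstacle is the soundness amplification: since this argument predates the Parallel Repetition Theorem, controlling the soundness of the repeated protocol requires a careful recursive consistency-labelling analysis, and it is this step --- together with the probabilistic construction of the partition system and the balancing of all parameters --- that carries the real technical weight and fixes the constant in the hardness factor. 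This establishes Lemma~\ref{fact:sc-hard-1}.
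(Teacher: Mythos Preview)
Your sketch is a faithful outline of the original Lund--Yannakakis argument, but note that the paper does not prove this lemma at all: it is stated as a cited result from~\cite{ly-hamp-93} and used as a black box in the reduction from \textsc{Set Cover} to the \textsc{Flow Diagram} problem. No proof is expected or given in the paper; the lemma functions purely as an imported fact, so your proposal goes well beyond what the paper itself supplies.
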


Given an instance $I=(E = \{e_1, e_2, \ldots , e_m\}, S = \{S_1, S_2, \ldots , S_n\})$ of Set Cover construct an instance of FD as follows.
Each item $e_i$ in $E$ corresponds to a state sequence $T_i$ of length two. Each subset $S_j$ corresponds to a criterion $C_j$. If a $S_j$ contains $e_i$ then the whole state sequence $T_i$ fulfils criterion $C_j$.

An algorithm for FD given the new instance outputs a \fd{} $F$ of size $f$. The output $F$ is depicted in Fig.~\ref{subfig:reduction-sc}. Given $F$ the interior vertices of $F$ corresponds to a set of subsets in $S$ whose union is $E$. The diagram $F$ has $f$ vertices if and only there is $f-2$ subsets in $S$ that forms a Set Cover of $E$.

We obtain Theorem~\ref{thm:hard} from the above reduction, together with Lemma~\ref{fact:sc-hard-1}.

%

\section{Algorithms}
\label{sec:Algorithms}

In this section, we present algorithms that compute a smallest \fd{} representing a set of $m$ state sequences of length $n$ for a set of $k$ criteria.
First, we present an algorithm for the general case, followed by more efficient algorithms for the case of monotone increasing and independent criteria, the case of monotone increasing and dependent criteria, and then two heuristic algorithms.

\subsection{General criteria}

Next, we present a dynamic programming algorithm for finding a smallest \fd{}.
Recall that a node $v$ in the \fd{} represents a criterion $C_j$ that is fulfilled by a contiguous 
segment in some of the state sequences. 
Let $\tau[i,j]$, $i\leq j$, denote the subsequence of $\tau$ starting at the $i$th state of $\tau$ and ending at the $j$th state, where $\tau[i,i]$ is the empty sequence. Construct an $(n+1)^m$ grid of vertices, where a vertex with coordinates $(x_1, \ldots, x_m)$, $0\leq x_1, \ldots, x_m \leq n$, represents $(\tau_1[0,x_1], \dots, \tau_m[0,x_m])$. 
Construct a \emph{prefix graph} $G$ 
as follows:

There is an edge between two vertices $v = (x_1,\dots, x_m)$ and $v' = (x'_1,\dots, x'_m)$, labeled by some criterion $C_j$, if and only if, for every $i$, $1\leq i \leq m$, one of the following two conditions is fulfilled:
(1) $x_i=x'_i$, or (2) all remaining $\tau_i[x_i+1,x'_i]$ jointly fulfil $C_j$.
Consider the edge between $(x_1,x_2)=(1,0)$ and $(x'_1,x'_2)=(1,1)$ in Fig.~\ref{fig:algo1}(b). Here $x_1=x'_1$ and $\tau_2[x_2+1,x'_2]$ fulfils $C_2$.

Finally, define $v_s$ to be the vertex in $G$ with coordinates $(0,\ldots, 0)$ and add an additional vertex $v_t$ outside the grid, which has an incoming edge from $(n, \ldots, n)$. This completes the construction of the prefix graph $G$.

\begin{figure}[tb]
    \includegraphics[width=\textwidth]{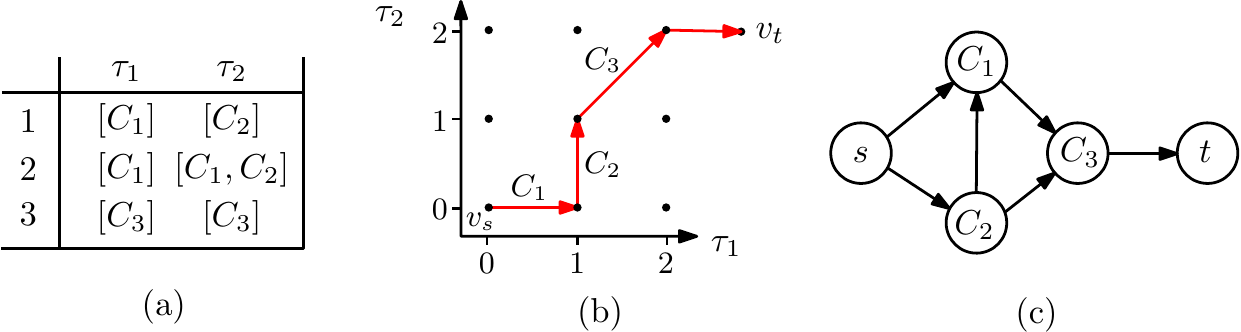}
    \caption{(a) A segmentation of $\T=\{\tau_1,\tau_2\}$ according to $\C=\{C_1,C_2,C_3\}$. (b) The prefix graph $G$ of the segmentation, omitting all but four of the edges. (c) The resulting \fd{} generated from the highlighted path in the prefix graph.}
    \label{fig:algo1}
\end{figure}

Now, a path in $G$ from $v_s$ to a vertex $v$ represents a valid segmentation of some prefix of each state sequence, and defines a \fd{} that describes these segmentations in the following way:
the empty path represents the \fd{} consisting only of the start node $s$.
Every edge of the path adds one new node to the \fd{}, labeled by the criterion that the segments fulfil. 
Additionally, for each node the \fd{} contains an edge from every node representing a previous segment, or from $s$ if the node is the first in a segmentation.
For a path leading from $v_s$ to $v_t$, the target node $t$ is added to the \fd{}, together with its incoming edges.
This ensures that the \fd{} represents valid segmentations and that each node represents at least one segment.
An example of this construction is shown in Fig.~\ref{fig:algo1}.

Hence the length of a path (where length is the number of edges on the path) equals the number of nodes of the corresponding \fd{}, excluding $s$ and~$t$.
Thus, we find an optimal \fd{} by finding a shortest $v_s$--$v_t$ path in $G$.
\begin{restatable}{lemma}{sfdgeneralsp}
\label{lem:sfd-sp}
A smallest \fd{} for a given set of state sequences is represented by a shortest $v_s$--$v_t$ path in $G$.
\end{restatable}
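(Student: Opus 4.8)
The plan is to establish a bijective-style correspondence between $v_s$--$v_t$ paths in $G$ and \fd{}s that represent a segmentation of every $\tau_i$, under which the number of edges on the path equals the number of non-terminal nodes of the \fd{}; the lemma then follows by taking a shortest such path. The construction in the paragraphs above already gives one direction: from a $v_s$--$v_t$ path $P$ we build a \fd{} $\F_P$ by turning each edge into a node labelled with its criterion, adding the incoming edges as described, and appending $t$. First I would verify that $\F_P$ is a valid \fd{}, i.e.\ that for each $i$ the sequence of edge-labels along $P$, restricted to those steps where coordinate $i$ strictly increases, spells out a segmentation of $\tau_i$ that appears as an $s$--$t$ path in $\F_P$. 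This is immediate from the edge definition: condition (2) guarantees that the chunk $\tau_i[x_i+1,x_i']$ consumed on that edge jointly fulfils $C_j$, and condition (1) says coordinate $i$ is untouched otherwise; since $P$ starts at $(0,\dots,0)$ and ends at $(n,\dots,n)$, the consumed chunks for sequence $i$ concatenate to all of $\tau_i$. The added edges ensure each such segmentation is realized as an $s$--$t$ walk, and consistency across sequences holds because all chunks consumed on a single edge share the same criterion instance. By construction $\F_P$ has exactly $|P|$ interior nodes plus $s,t$.

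For the reverse direction I would show that any \fd{} $\F$ representing all the state sequences yields a $v_s$--$v_t$ path in $G$ of length at most (the number of interior nodes of $\F$). Fix, for each $\tau_i$, an $s$--$t$ path $\pi_i$ in $\F$ witnessing a valid segmentation. The key idea is to process the nodes of $\F$ in a topological order $u_1,\dots,u_p$ and track, after having ``used'' a prefix $u_1,\dots,u_\ell$, the vector $(x_1,\dots,x_m)$ where $x_i$ is the length of the prefix of $\tau_i$ consumed by the portion of $\pi_i$ lying in $\{u_1,\dots,u_\ell\}$. Starting from $(0,\dots,0)$, appending $u_{\ell+1}$ moves each coordinate $i$ either not at all (if $u_{\ell+1}\notin\pi_i$) or forward by the length of the segment that $u_{\ell+1}$ represents in $\pi_i$ (if $u_{\ell+1}\in\pi_i$); in the latter case that segment jointly fulfils the label of $u_{\ell+1}$ across all $i$ with $u_{\ell+1}\in\pi_i$, which is exactly condition (2), while condition (1) covers the rest. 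Hence each such step is a valid edge of $G$ (or a no-op when $u_{\ell+1}$ lies on no $\pi_i$, which we simply skip), and after processing all of $u_1,\dots,u_p$ every coordinate has reached $n$; appending the edge to $v_t$ completes a $v_s$--$v_t$ path of length at most $p$. Combining the two directions, the minimum number of interior nodes over all valid \fd{}s equals the minimum length of a $v_s$--$v_t$ path in $G$, and the optimal \fd{} is the one built from a shortest such path.

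The main obstacle I anticipate is the reverse direction, specifically handling the interaction between the topological order of $\F$ and the orderings imposed by the individual witness paths $\pi_i$. One must be sure that a single topological order of $\F$ is simultaneously consistent with the traversal order of every $\pi_i$ — this holds because each $\pi_i$ is itself a directed path in the DAG $\F$, so its node order refines to any topological order of $\F$ — and that the coordinate vector is therefore monotonically nondecreasing throughout the process, so that each step is a legal forward edge of $G$ rather than something that would require moving backward. A secondary subtlety is that different $\pi_i$ passing through the same node $u_{\ell+1}$ must associate it with segments that can be consumed on one $G$-edge; this is precisely what the ``jointly fulfil $C_j$'' clause in condition (2) permits, and it matches the consistency requirement already built into the definition of a valid \fd{}. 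Once these points are checked, the length-preserving correspondence is routine and the lemma follows.
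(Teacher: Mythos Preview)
Your proposal is correct and follows essentially the same approach as the paper: the forward direction uses the described path-to-\fd{} construction, and the reverse direction processes the nodes of a valid \fd{} in a topological order while tracking, for each $\tau_i$, the prefix already consumed along its witness path, showing each step corresponds to an edge of $G$. The only cosmetic difference is that you skip nodes lying on no witness path (yielding a path of length \emph{at most} the number of interior nodes), whereas the paper keeps an exact count by implicitly allowing trivial steps; both versions suffice for the lemma.
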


\begin{proof}
We show that every $v_s$--$v_t$ path $P$ in $G$ represents a valid \fd{} $F$, with the path length equal to the \fd{}'s cost, and vice versa.
Thus, a shortest path represents a minimal valid \fd{} for the given state sequences.

Let $P := (v_s =: v_1, v_2, \dots, v_\ell := v_t)$ be a $v_s$--$v_t$ path of length $\ell-1$ in $G$.
As described in the text, every $v_s$--$v_t$ path in $G$ represents a valid \fd{}, and every vertex visited by the path contributes exactly one node to the \fd{}.
Thus, $P$ represents a valid \fd{} with exactly $\ell$ nodes.

For the other direction, let $F$ be a valid \fd{} of a set of state sequences $\{T_1,\dots,T_m\}$, each of length $n$.
That is, there are segmentations $\S_1,\dots,\S_m$ of the state sequences such that every segmentation is represented in $F$ in the following way:
assume the nodes of $F$ are $\{s=:f_1,f_2,\dots,f_\ell:=t\}$ according to some topological sorting.
Let $\S_j$ consist of the segments $s_{j,1},\dots,s_{j,\sigma_j}$, where $\sigma_j$ is the number of segments in $\S_j$.
Then there exists a path $(s=:f_{j,0},f_{j,1},\dots,f_{j,\sigma_j},t)$ in $F$ such that each segment $s_{j,i}$ fulfils the criterion $C(f_{j,i})$ associated with $f_{j,i}$.

Let $b_{j,i}$ be the index in $T_j$ at which $s_{j,i}$ ends, for $1\leq i\leq \sigma_j$, and let $b_{j,0} := 1$.
Since $\S_j$ is a segmentation of $T_j$, $b_{j,\sigma_j}=n$.
Let $F_\lambda$ be the subdiagram of $F$ induced by $(f_1,\dots,f_\lambda)$, for $1\leq\lambda\leq\ell$.
We define $x_{j,\lambda} := \max \{b_{j,i} \mid f_{j,i}\in\{f_1,\dots,f_\lambda\}\}$.
We show inductively that for each $\lambda \in \{1,2,\dots,\ell\}$, $G$ contains a path from $v_s$ to the vertex $v_\lambda := (x_{1,\lambda},\dots,x_{m,\lambda})$ with length $\lambda-1$, i.e. the number of nodes in $F_\lambda$ excluding $s$.
\begin{itemize}
\item Base case $\lambda = 1$:
Note that $v_1 = v_s$, and thus there is a path of length $ \lambda-1 = 0$ from $v_s$ to $v_1$.

\item Induction step: The node $f_{\lambda+1}$ represents the segments \[\{T_j[x_{j,\lambda},x_{j,\lambda+1}] \mid 1\leq j\leq m \wedge x_{j,\lambda} \neq x_{j,\lambda+1}\}.\]
Since the \fd{} is valid, these segments fulfil the criterion $C(f_{\lambda+1})$, and thus $G$ contains an edge from $v_\lambda$ to $v_{\lambda+1}$.
Since a path from $v_s$ to $v_\lambda$ of length $\lambda$ exists by the induction hypothesis, there is a path from $v_s$ to $v_{\lambda+1}$ of length $\lambda+1$.
\end{itemize}

For every state sequence $T_j$, there exists an index $\varphi_j \in \{1,\dots,\ell-1\}$ such that $x_{j,\lambda}=n$ for all $\lambda\geq\varphi_j$.
Thus, $v_{\ell-1} = (n,n,\dots,n)$ and $G$ contains an edge from $v_{\ell-1}$ to $v_\ell = v_t$.
So, there is a path from $v_s$ to $v_t$ of length $\ell-1$.
\qed
\end{proof}
Recall that $G$ has $(n+1)^m$ vertices. 
Each vertex has $O(k (n+1)^m)$ outgoing edges,
thus, $G$ has $O(k(n+1)^{2m})$ edges in total. To decide if an edge is present in $G$, check if the nonempty segments the edge represents fulfil the criterion. Thus, we need to perform $O(k(n+1)^{2m})$ of these checks.
There are $m$ segments of length at most $n$, and we assume the cost for checking this is $T(m,n)$.
Thus, the cost of constructing $G$ is $O(k(n+1)^{2m}\cdot T(m,n))$, and finding the shortest path requires $O(k(n+1)^{2m})$ time.


\begin{theorem}
The algorithm described above computes a smallest \fd{} for a set of $m$ state sequences, each of length at most $n$, and $k$ criteria in $O((n+1)^{2m} k\cdot T(m,n))$ time, where $T(m,n)$ is the time required to check if a set of $m$ subsequences of length at most $n$ fulfils a criterion.
\end{theorem}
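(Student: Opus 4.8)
The plan is to establish the theorem by combining the correctness statement of Lemma~\ref{lem:sfd-sp} with a careful accounting of the size of the prefix graph $G$ and the cost of building it. Since correctness is already handled—Lemma~\ref{lem:sfd-sp} tells us that a shortest $v_s$--$v_t$ path in $G$ represents a smallest \fd{}—what remains is purely the running time analysis. First I would recall that $G$ has exactly $(n+1)^m$ grid vertices plus the extra sink $v_t$. Then I would bound the out-degree: from a fixed vertex $v = (x_1,\dots,x_m)$, an outgoing edge is determined by a choice of criterion $C_j$ (there are $k$ of them) together with a target vertex $v' = (x'_1,\dots,x'_m)$ with $x'_i \geq x_i$ for all $i$, so there are at most $k(n+1)^m$ candidate outgoing edges per vertex, giving $O(k(n+1)^{2m})$ candidate edges overall.

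Next I would address the cost of deciding, for each candidate edge, whether it is actually present in $G$. By the definition of the prefix graph, this requires checking whether the nonempty segments $\{\tau_i[x_i+1, x'_i] : x_i \neq x'_i\}$ jointly fulfil the criterion $C_j$ labelling the edge. This is at most $m$ subsequences, each of length at most $n$, and by hypothesis the time to perform one such check is $T(m,n)$. Hence constructing $G$ in full costs $O(k(n+1)^{2m} \cdot T(m,n))$ time. Once $G$ is built, a shortest path from $v_s$ to $v_t$ in a directed graph with $O((n+1)^m)$ vertices and $O(k(n+1)^{2m})$ edges can be found by breadth-first search (the graph is unweighted, as each edge contributes exactly one \fd{} node) in time linear in the graph size, i.e. $O(k(n+1)^{2m})$, which is dominated by the construction cost as long as $T(m,n) \geq 1$.

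Putting these together, the total running time is $O(k(n+1)^{2m} \cdot T(m,n)) = O((n+1)^{2m} k \cdot T(m,n))$, and by Lemma~\ref{lem:sfd-sp} the path returned corresponds to a smallest \fd{}, which establishes the theorem. I do not expect any genuine obstacle here: the argument is a routine size-and-cost bookkeeping exercise built on top of the already-proved structural lemma. The only point requiring a little care is making sure the edge-presence check is charged correctly—one must note that the relevant check involves only the at most $m$ \emph{nonempty} segments induced by the coordinate pairs, so a single invocation of the criterion oracle (cost $T(m,n)$) suffices per candidate edge rather than anything more expensive—and that the shortest-path computation does not dominate. Both are straightforward to verify.
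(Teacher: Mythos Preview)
Your proposal is correct and mirrors the paper's own argument essentially line for line: the paper also invokes Lemma~\ref{lem:sfd-sp} for correctness, counts $(n+1)^m$ vertices with $O(k(n+1)^m)$ outgoing edges each for $O(k(n+1)^{2m})$ edges total, charges $T(m,n)$ per edge-presence check, and observes that the $O(k(n+1)^{2m})$ shortest-path computation is dominated by construction.
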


\subsection{Monotone decreasing and independent criteria}
If all criteria are decreasing monotone and independent, we can use ideas similar to those presented in~\cite{bdks-stfa-11} to avoid constructing the full graph. 
%
From a given vertex with coordinates $(x_1,\dots,x_m)$, we can greedily move as far as possible along the sequences, since the monotonicity guarantees that this never leads to a solution that is worse than one that represents shorter segments.
For a given criterion $C_j$, we can compute for each $\tau_i$ independently the maximum $x'_i$ such that $\tau_i[x_i+1,x'_i]$ fulfils $C_j$.
This produces coordinates $(x'_1,\dots,x'_m)$ for a new vertex,
which is the optimal next vertex using $C_j$.
By considering all criteria we obtain $k$ new vertices. 
However, unlike the case with a single state sequence, there is not necessarily one vertex 
that is better than all others (i.e. largest ending position), since there is no total order on the vertices.
Instead, we consider all vertices that are not dominated by another vertex, where a vertex $p$ dominates a vertex $p'$ if each coordinate of $p$ is at least as large as the corresponding coordinate of $p'$, and at least one of $p$'s coordinates is larger.

Let $V_i$ be the set of vertices of $G$ that are reachable from $v_s$ in exactly $i$ steps, and define $M(V) := \{v\in V\mid \text{no vertex $u\in V$ dominates $v$}\}$ to be the set of \emph{maximal vertices} of a vertex set $V$.
Then a shortest $v_s$--$v_t$ path through $G$ can be computed by iteratively computing $M(V_i)$ for increasing $i$, until a value of $i$ is found for which $v_t\in M(V_i)$.
Observe that $|M(V)| = O((n+1)^{m-1})$ for any set $V$ of vertices in the graph.
Also note that $V_0 = M(V_0) = v_s$.

\begin{restatable}{lemma}{MDIa}
\label{lem:Di-step}
For each $i\in\{1,\dots,\ell-1\}$, every vertex in $M(V_i)$ is reachable in one step from a vertex in $M(V_{i-1})$. Here, $\ell$ is the distance from $v_s$ to $v_t$.
\end{restatable}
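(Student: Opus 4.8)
The claim is a reachability/monotonicity statement, so the natural approach is to take an arbitrary vertex $v\in M(V_i)$ and trace back one step along a shortest path that witnesses $v\in V_i$, then argue that the predecessor can be replaced by a \emph{maximal} predecessor. Formally, I would start with a vertex $u\in V_{i-1}$ such that there is an edge $u\to v$ in $G$ labelled by some criterion $C_j$. If $u\in M(V_{i-1})$ we are done, so suppose $u$ is dominated by some $u'\in M(V_{i-1})$ (such a $u'$ exists because domination is a partial order on a finite set and $V_{i-1}\neq\emptyset$). The goal is to show that from $u'$ one can reach, in a single step, some vertex $v'$ that \emph{dominates} $v$; since $v\in M(V_i)$ and $v'\in V_i$, this forces $v'=v$, giving an edge $u'\to v$ with $u'\in M(V_{i-1})$.

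The heart of the argument is the monotone-decreasing, independent structure of the criteria. Write $u=(x_1,\dots,x_m)$, $v=(x'_1,\dots,x'_m)$, and $u'=(y_1,\dots,y_m)$ with $y_i\ge x_i$ for all $i$. Because $C_j$ is independent, the edge $u\to v$ means that for each $i$ either $x_i=x'_i$ or $\tau_i[x_i+1,x'_i]$ fulfils $C_j$. For the step out of $u'$, define $y'_i$ to be the largest index such that $\tau_i[y_i+1,y'_i]$ fulfils $C_j$ (and $y'_i=y_i$ if no nonempty such segment exists). I claim $y'_i\ge x'_i$ for every $i$, which says $v':=(y'_1,\dots,y'_m)$ dominates (or equals) $v$. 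If $x_i=x'_i$ this is immediate from $y'_i\ge y_i\ge x_i=x'_i$. Otherwise $\tau_i[x_i+1,x'_i]$ fulfils $C_j$; since $x_i\le y_i\le x'_i$ — wait, we do not know $y_i\le x'_i$ — so I would split: if $y_i\ge x'_i$ then $y'_i\ge y_i\ge x'_i$ and we are done; if $y_i<x'_i$ then $\tau_i[y_i+1,x'_i]$ is a subsequence of $\tau_i[x_i+1,x'_i]$, hence fulfils $C_j$ by monotone decreasingness, so the greedy maximum $y'_i$ satisfies $y'_i\ge x'_i$. In all cases $v'$ dominates $v$ or equals it, and $v'$ is reachable from $u'\in M(V_{i-1})\subseteq V_{i-1}$ in one step, so $v'\in V_i$. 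Since $v$ is maximal in $V_i$ and $v'\in V_i$ with $v'$ dominating-or-equal-to $v$, we must have $v'=v$; thus $v$ is reachable in one step from $u'\in M(V_{i-1})$.

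I would then note the base of the iteration is consistent ($V_0=M(V_0)=\{v_s\}$) so the statement is non-vacuous for $i=1$, and remark that the range $i\le \ell-1$ is exactly what is needed so that $V_i$ is nonempty and $v_t$ has not yet been absorbed. The main obstacle — and the only place where care is genuinely required — is the case analysis showing $y'_i\ge x'_i$ when $y_i$ lies beyond $x'_i$; this is where one must invoke monotone decreasingness on the correct subsequence and use independence to treat each coordinate $i$ in isolation. Everything else (domination being a partial order, finiteness giving a maximal element above any given one, maximality of $v$ forcing equality) is routine.
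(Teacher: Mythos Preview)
Your proposal is correct and follows essentially the same approach as the paper's proof: pick a predecessor $u\in V_{i-1}$ of $v$, replace it by a dominating $u'\in M(V_{i-1})$, and use monotonicity to produce from $u'$ a one-step successor that dominates or equals $v$, which by maximality of $v$ in $V_i$ must equal $v$. The paper phrases this by contradiction and leaves the coordinatewise case analysis (your split on $y_i\ge x'_i$ versus $y_i<x'_i$) implicit, but the argument is the same.
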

\begin{proof}
Assume there exists a vertex $v \in M(V_i)$ that has no edge from a vertex in $M(V_{i-1})$.
Since $v\in M(V_i)$, $v$ is also contained in $V_i$, and thus its distance from $v_s$ is $i$.
Thus, there must be a vertex $v'$ at distance $i-1$ from $v_s$, i.e. $v'\in V_{i-1}$, that has an edge to $v$ representing a criterion $C_j$.
By assumption, $v'$ is not contained in $M(V_{i-1})$, and thus there is a vertex $v'' \in M(V_{i-1})$ that dominates $v'$.
But then, by the monotonicity of $C_j$, there must be a vertex reachable from $v''$ that is identical to $v$ or dominates $v$. Both cases lead to a contradiction.
\qed
\end{proof}

$M(V_i)$ is computed by computing the farthest reachable vertex for each $v\in M(V_{i-1})$ and criterion, thus yielding a set $D_i$ of $O((n+1)^{m-1}k)$ vertices.
This set contains $M(V_i)$ by Lemma~\ref{lem:Di-step}, so we now need to remove all vertices that are dominated by some other vertex in the set to obtain $M(V_i)$.

We find $M(V_i)$ using a copy of $G$.
Each vertex may be marked as being in $D_i$ or dominated by a vertex in $D_i$.
We process the vertices of $D_i$ in arbitrary order.
For a vertex $v$, if it is not yet marked, we mark it as being in $D_i$.
When a vertex is newly marked, we mark its $\leq m$ immediate neighbours dominated by it as being dominated.
After processing all vertices, the grid is scanned for the vertices still marked as being in $D_i$.
These vertices are exactly $M(V_i)$.

When computing $M(V_i)$, $O((n+1)^{m-1}k)$ vertices need to be considered, and the maximum distance from $v_s$ to $v_t$ is $m(n+1)$, so the algorithm considers $O(mk(n+1)^m)$ vertices.
We improve this bound by a factor $m$ using the following: 
\begin{restatable}{lemma}{MDIb}
\label{lem:ind-mondec:Di_size}
The total size of all $D_i$, for $0\leq i\leq\ell-1$, is $O(k(n+1)^m)$.
\end{restatable}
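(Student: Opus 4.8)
The plan is to bound the total work by charging each vertex placed into some $D_i$ to a distinct vertex of the grid $G$, so that the $(n+1)^m$ vertices of $G$ pay for all the $D_i$'s at once (up to the factor $k$). Recall that $D_i$ is obtained by taking, for every $v\in M(V_{i-1})$ and every criterion $C_j$, the farthest vertex reachable from $v$ using $C_j$. So $|D_i| \le k\,|M(V_{i-1})|$, and it suffices to show that $\sum_{i=0}^{\ell-1} |M(V_{i-1})| = O((n+1)^m)$, i.e. that the sets $M(V_0), M(V_1), \dots$ are ``essentially disjoint'' as subsets of the grid.

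The key structural claim is a strict-monotonicity-of-coordinates statement: if $v\in M(V_i)$ then $v\notin V_{i'}$ for any $i'<i$ — in fact, more is true. I would argue that along any shortest path the coordinate vectors strictly increase in the dominance order at every step (each edge advances at least one coordinate, since an edge either keeps a coordinate fixed or advances it, and a non-advancing edge would be useless). Hence a vertex reachable in exactly $i$ steps from $v_s$ cannot also be reachable in fewer steps; therefore the sets $V_0, V_1, \dots, V_{\ell-1}$ need not be disjoint in general, but I claim the maximal sets $M(V_i)$ are pairwise disjoint. Indeed, if $v \in M(V_i)\cap M(V_{i'})$ with $i<i'$, then $v$ is reachable in $i$ steps; but by Lemma~\ref{lem:Di-step} and the greedy (farthest-reachable) construction, the vertex reached from $v$'s $i$-step predecessor chain could be pushed further, contradicting that $v$ sits at distance $i'>i$ on a shortest path. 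I would phrase this cleanly using the fact that $D_i$ is built from farthest-reachable vertices: a vertex $v$ appears in $D_i$ only as the $C_j$-farthest successor of some $u\in M(V_{i-1})$, and I will show each such $v$ can be assigned to a grid cell (say, to $v$ itself, or to the pair determined by the last criterion used) in a way that no grid cell is used by two different $D_i$'s with the same criterion.

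Concretely, the charging I would use: assign to each $v\in D_i$ the grid vertex $v$ together with the criterion $C_j$ that was used to reach it. I claim the map $(i, v\in D_i \text{ via } C_j) \mapsto (v, C_j)$ is injective. Suppose $v$ is the $C_j$-farthest successor of both $u \in M(V_{i-1})$ and $u' \in M(V_{i'-1})$ with $i \le i'$. Because the criteria are monotone decreasing and independent, "the $C_j$-farthest vertex reachable from $x$" depends only on $x$ coordinatewise and is itself monotone: if $u$ is dominated-or-equal to $u'$ then their $C_j$-successors are comparable. One shows $u$ and $u'$ must then be comparable (else $v$ could not be the farthest $C_j$-successor of both), and a strict comparison forces $i = i'$ via the strict coordinate increase along shortest paths; and if $u=u'$ then trivially $i=i'$. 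Hence $(v,C_j)$ determines $i$, so each of the $(n+1)^m \cdot k$ pairs is charged at most once, giving $\sum_i |D_i| = O(k(n+1)^m)$.

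The main obstacle I anticipate is making the injectivity argument fully rigorous in the presence of ties and empty segments: the dominance order is only a partial order, so I must be careful that "farthest reachable via $C_j$" is genuinely well-defined as a single vertex (it is, by taking the coordinatewise maximum valid endpoint per sequence, using independence) and that two incomparable predecessors cannot produce the same successor without already violating maximality of $M(V_{i-1})$. I expect the clean way to handle this is to prove first the helper fact that the coordinatewise-max successor operation $f_j$ is monotone ($u \preceq u' \Rightarrow f_j(u) \preceq f_j(u')$) and satisfies $v = f_j(u)$ with $u$ "minimal" is unique — and then the bound follows from Lemma~\ref{lem:Di-step} plus the disjointness of the $M(V_i)$ by a short induction on $i$.
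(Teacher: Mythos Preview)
Your reduction in the first paragraph is exactly the paper's: since $|D_i|\le k\,|M(V_{i-1})|$, it suffices to show that the sets $M(V_0),M(V_1),\dots$ are pairwise disjoint, so that $\sum_i|M(V_i)|\le (n+1)^m$. You state this target correctly.

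Where you diverge is in \emph{how} you establish disjointness. Your second paragraph's shortest-path reasoning is muddled (the sentence ``contradicting that $v$ sits at distance $i'>i$ on a shortest path'' does not make sense once you have already granted that $v$ is reachable in $i<i'$ steps), and the injectivity-of-$(v,C_j)$ argument in your third paragraph is both unnecessary and likely false as stated: incomparable predecessors \emph{can} share the same $C_j$-farthest successor. For instance, with $m=2$ and a criterion that extends each coordinate independently up to some cap, $u=(1,3)$ and $u'=(3,1)$ can both have $f_j(u)=f_j(u')=(5,5)$. So the step ``$u$ and $u'$ must then be comparable'' does not go through, and the charging to pairs $(v,C_j)$ is not injective.

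The paper's proof avoids all of this with a one-line forward argument: if $v\in M(V_i)$, then the greedy successors of $v$ placed into $D_{i+1}$ dominate $v$; hence something in $M(V_{i+1})$ dominates $v$, so $v\notin M(V_{i+1})$, and inductively $v\notin M(V_{i+j})$ for any $j>0$. Thus each grid vertex lies in at most one $M(V_i)$, giving $\sum_i|D_i|\le k\sum_i|M(V_i)|\le k(n+1)^m$. You had this claim in hand (``the maximal sets $M(V_i)$ are pairwise disjoint'') but reached past it for a harder and ultimately flawed mechanism instead of the direct dominance observation.
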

\begin{proof}
If a vertex $v$ appears in $M(V_i)$ for some $i\in\{0,\dots,\ell-1\}$, it generates vertices for $D_{i+1}$ that dominate $v$, and thus $v\not\in M(V_{i+j})$ for any $j>0$.
So, each of the $n^m$ vertices appears in at most one $M(V_i)$ and generates $k$ candidate vertices for $D_{i+1}$ (not all unique).
Hence the total size of all $D_i$ is $O(kn^m)$.
\qed
\end{proof}

Using this result, we compute all $M(V_i)$ in $O((k+m)(n+1)^m)$ time, since $O(k(n+1)^m)$ vertices are marked directly, and each of the $(n+1)^m$ vertices is checked at most $m$ times when a direct successor is marked.
One copy of the grid can be reused for each $M(V_i)$, since each vertex of $D_{i+1}$ dominates at least one vertex of $M(V_i)$ and is thus not yet marked while processing $D_j$ for any $j\leq i$.

Since the criteria are independent, the farthest reachable point for a given starting point and criterion can be precomputed for each state sequence separately.
Using the monotonicity we can traverse each state sequence once per criterion and thus need to test only $O(nmk)$ times whether a subsequence fulfils a criterion.

\begin{theorem}
The algorithm described above computes a smallest \fd{} for $m$ state sequences of length $n$ with $k$ independent and monotone decreasing criteria in $O({mnk\cdot T(1,n)} + (k+m)(n+1)^m)$ time, where $T(1, n)$ is the time required to check if a subsequence of length at most $n$ fulfils a criterion.
\end{theorem}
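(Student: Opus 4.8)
The statement bundles two claims: \emph{correctness} (the procedure outputs a smallest \fd{}) and the stated running time. The plan is to prove them in that order, leaning on the three lemmas already established.

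For correctness, I would first invoke Lemma~\ref{lem:sfd-sp} to reduce the goal to: the iterative scheme computes the length of a shortest $v_s$--$v_t$ path in the prefix graph $G$. I would then prove by induction on $i$ that the vertex set produced in round $i$ equals $M(V_i)$. The base case $i=0$ is immediate since the initial set is $\{v_s\}=M(V_0)$. For the step, assume round $i-1$ produced $M(V_{i-1})$; expanding each of its vertices greedily along every criterion yields the candidate set $D_i$, and I must show (a) $M(V_i)\subseteq D_i$ and (b) discarding dominated vertices of $D_i$ leaves exactly $M(V_i)$. Part (a) combines Lemma~\ref{lem:Di-step} (every vertex of $M(V_i)$ is reachable in one step from some $u\in M(V_{i-1})$) with monotonicity: if $u\to v$ via $C_j$, the greedy target from $u$ along $C_j$ is coordinate-wise at least $v$; if it were strictly larger it would be a vertex of $V_i$ dominating $v$, contradicting $v\in M(V_i)$, so the greedy target is $v$ itself, whence $v\in D_i$. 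For (b), every element of $D_i$ lies in $V_i$, so it suffices to argue that a vertex of $D_i$ dominated somewhere in $V_i$ is already dominated inside $D_i$; this is again a monotonicity/Lemma~\ref{lem:Di-step} trace-back of the dominator into $D_i$. Finally, $v_t$ is the unique sink and is dominated by no grid vertex, so the first index $i$ with $v_t\in M(V_i)$ is exactly $\mathrm{dist}(v_s,v_t)$; hence the scheme is correct.

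Two auxiliary points I would record. First, the marking routine on the reused copy of $G$ really outputs $M(D_i)$: processing the vertices of $D_i$ in any order and, from each newly marked vertex, flood-marking its $\le m$ coordinate-wise predecessors as ``dominated'' leaves a vertex of $D_i$ unmarked-as-dominated precisely when no other vertex of $D_i$ dominates it; order-independence holds because any dominator $w$ of $v$ triggers a downward flood reaching $v$ (everything below $v$ lies below $w$). Second, one copy of $G$ suffices across all rounds: by the argument in the proof of Lemma~\ref{lem:ind-mondec:Di_size} each grid vertex occurs in at most one $M(V_i)$, and each vertex of $D_{i+1}$ coordinate-wise dominates a vertex of $M(V_i)$, so it is still unmarked when round $i+1$ runs.

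For the running time I would separate preprocessing from the main loop. Independence means the test ``does $\tau_i[x+1,x']$ fulfil $C_j$?'' never references other sequences, and monotonicity means that for each pair $(\tau_i,C_j)$ a single two-pointer sweep computes the farthest reachable end position from every start position using only $O(n)$ fulfilment tests; over all pairs this is $O(mnk)$ tests at cost $T(1,n)$ each, i.e. $O(mnk\cdot T(1,n))$ plus $O(mnk)$ bookkeeping. In the main loop a greedy step is a constant number of table lookups per coordinate; by Lemma~\ref{lem:ind-mondec:Di_size} the total size of all $D_i$ is $O(k(n+1)^m)$, bounding the direct marking work, and by the reuse argument each of the $(n+1)^m$ grid vertices is touched $\le m$ times (once per coordinate, when a coordinate-successor is first marked), giving $O((k+m)(n+1)^m)$ over all rounds, after which the answer is read off directly. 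Summing yields the claimed bound. The main obstacle I anticipate is part (b) of the induction — certifying that pruning $D_i$ never discards a genuine maximal vertex of $V_i$ — together with the care needed to check that the single reused grid and the order-independent marking really reconstruct $M(D_i)$ exactly; the time accounting itself is routine once Lemmas~\ref{lem:Di-step} and~\ref{lem:ind-mondec:Di_size} are in place.
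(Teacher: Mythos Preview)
Your proposal is correct and follows essentially the same approach as the paper: correctness via Lemma~\ref{lem:sfd-sp} together with the inductive claim that each round produces $M(V_i)$ (using Lemma~\ref{lem:Di-step} and monotonicity), and the time bound via the precomputation argument plus Lemma~\ref{lem:ind-mondec:Di_size} and the grid-reuse marking analysis. You have in fact spelled out several steps the paper leaves implicit---notably that the greedy target from $u$ along $C_j$ must equal $v$ itself (your part~(a)) and that $M(D_i)=M(V_i)$ (your part~(b))---so the obstacles you anticipate are real but your sketched arguments for them are sound.
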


\subsection{Monotone decreasing and dependent criteria}
For monotone decreasing and dependent criteria, we can use a similar approach to that described above, however, for a given start vertex $v$ and criterion $C$, there is not a single vertex $v'$ that dominates all vertices reachable from $v$ using this criterion. Instead there may be $\Theta((n+1)^{m-1})$ maximal reachable vertices from $v$ for criterion $C$. The maximal vertices can be found by testing $O((n+1)^{m-1})$ vertices on or near the upper envelope of the reachable vertices 
in $O((n+1)^{m-1}\cdot T(m,n))$ time. Using a similar reasoning as in Lemma~\ref{lem:ind-mondec:Di_size}, we can show that the total size of all $D_i$ ($0\leq i\leq\ell-1$) is $O(k(n+1)^{2m-1})$, which gives:

\begin{theorem}
The algorithm from the previous section computes a smallest \fd{} for $m$ state sequences of length $n$ with $k$ monotone decreasing criteria in $O(k(n+1)^{2m-1}\cdot T(m,n) + m(n+1)^m)$ time, where $T(m, n)$ is the time required to check if a set of $m$ subsequences of length at most $n$ fulfils a criterion.
\end{theorem}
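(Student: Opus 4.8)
The plan is to run the level-by-level search of the previous subsection essentially verbatim --- maintain, for $i=0,1,2,\dots$, the set $M(V_i)$ of maximal vertices reachable from $v_s$ in exactly $i$ steps, and stop once $v_t$ appears --- changing only the subroutine that expands a vertex. For independent criteria a vertex $v\in M(V_{i-1})$ together with a criterion $C_j$ yielded a \emph{single} dominating successor; for dependent criteria it must instead yield the whole set of vertices that are maximal among those reachable from $v$ using $C_j$. Correctness is inherited: Lemma~\ref{lem:Di-step} carries over unchanged, since its proof only uses monotonicity of the criterion labelling an edge $v'\to v$ and the fact that from any $v''$ dominating $v'$ that same criterion is still fulfilled on the resulting shorter (hence sub-)segments, and neither step used independence. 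Hence the expanded set $D_i$ --- the union over $v\in M(V_{i-1})$ and over criteria of the maximal reachable vertices --- still contains $M(V_i)$, so filtering $D_i$ for its maximal elements yields exactly $M(V_i)$, and iterating until $v_t\in M(V_i)$ produces a shortest $v_s$--$v_t$ path, which is a smallest \fd{} by Lemma~\ref{lem:sfd-sp}.

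The new ingredient, and the step I expect to be the main obstacle, is the expansion subroutine itself. Fix $v=(x_1,\dots,x_m)$ and a criterion $C_j$. By monotonicity the set $R$ of vertices reachable from $v$ via $C_j$ is downward closed inside the box $[x_1,n]\times\dots\times[x_m,n]$, so for each of the $(n+1)^{m-1}$ settings of the first $m-1$ coordinates there is at most one maximal element of $R$ with those coordinates; thus $R$ has $O((n+1)^{m-1})$ maximal elements, all lying on its upper envelope. I would compute them with a monotone boundary traversal (the multidimensional analogue of the planar staircase walk), which visits $O((n+1)^{m-1})$ grid cells, spending one criterion test per cell, matching the claimed $O((n+1)^{m-1}\cdot T(m,n))$. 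The fiddly points are getting this down to $O((n+1)^{m-1})$ tests rather than the easy $O((n+1)^{m-1}\log n)$ obtained by binary-searching each column separately, and verifying that the cells the traversal probes include every maximal vertex of $R$; a recursion on the number of free coordinates is probably the cleanest route. Note also that, unlike the independent case, the tests are joint, so no per-sequence precomputation is available and $\Theta(k(n+1)^{2m-1})$ such tests are genuinely performed.

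Finally I would redo the counting as in Lemma~\ref{lem:ind-mondec:Di_size}: every vertex produced from $v$ strictly dominates $v$ (an edge of $G$ carries at least one non-empty segment) and $D_{i+1}\subseteq V_{i+1}$, so a short induction shows that once $v\in M(V_i)$ some vertex of $V_{i+j}$ strictly dominates $v$ for every $j\ge 1$; hence $v$ lies in at most one of the $M(V_i)$ the algorithm computes (the last of which is $\{v_t\}$). Therefore $\sum_i|M(V_i)|=O((n+1)^m)$, each such vertex spawns $k$ batches of $O((n+1)^{m-1})$ candidates, so $\sum_i|D_i|=O(k(n+1)^{2m-1})$ and the number of criterion tests is $O(k(n+1)^{2m-1})$ as well. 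Filtering each $D_i$ for its maximal elements with the same single reused grid and neighbour-marking scheme as before costs $O(\sum_i|D_i|+m(n+1)^m)$. Summing, the $O(k(n+1)^{2m-1})$ criterion tests, at cost $T(m,n)$ each, dominate --- absorbing the $O(k(n+1)^{2m-1})$ of bookkeeping --- and the grid scanning contributes the remaining $O(m(n+1)^m)$, giving the stated bound.
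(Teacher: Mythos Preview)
Your proposal is correct and follows essentially the same approach as the paper: the paper's own argument is a terse paragraph stating that the independent-criteria algorithm is reused with the single-successor expansion replaced by enumerating the $O((n+1)^{m-1})$ maximal reachable vertices on the upper envelope (at $O((n+1)^{m-1}\cdot T(m,n))$ per vertex--criterion pair), and that the analogue of Lemma~\ref{lem:ind-mondec:Di_size} then gives $\sum_i|D_i|=O(k(n+1)^{2m-1})$. Your write-up simply fleshes this out---in particular, your explicit observation that the proof of Lemma~\ref{lem:Di-step} uses only monotonicity (not independence), and your breakdown of the filtering cost as $O(\sum_i|D_i|+m(n+1)^m)$, are details the paper leaves implicit but which are exactly in line with its intended argument.
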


\subsection{Heuristics}
The hardness results presented in the introduction
indicate that it is unlikely that the performance of the algorithms will be acceptable in practical situations, except for very small inputs. As such, we investigated heuristics that may produce usable results that can be computed in reasonable time.

For monotone decreasing and independent criteria, the heuristics we consider are based on the observation that by limiting $V_i$, the vertices that are reachable from $v_s$ in $i$ steps, to a fixed size, the complexity of the algorithm can be controlled.  Given that every path in a prefix graph represents a valid \fd{}, any path chosen in the prefix graph will be valid, though not necessarily optimal. In the worst case, a vertex that advances along a single state sequence a single time-step (i.e.\ advancing only one state) will be selected, and for each vertex, all $k$ criteria must be evaluated, so $O(kmn)$ vertices may be processed by the algorithm. We consider two strategies for selecting the vertices in $V_i$ to retain:

\noindent (1) For each vertex in $V_i$, determine the number of state sequences that are advanced in step $i$ and retain the top $q$ vertices [\emph{sequence heuristic}].

\noindent (2) For each vertex in $V_i$, determine the number of time-steps that are advanced in all state sequences in step $i$ and retain the top $q$ vertices [\emph{time-step heuristic}].

In our experiments we use $q=1$ since any larger value would immediately give an exponential worst-case running time.




\section{Experiments} \label{sec:experiments}

%
The objectives of the experiments were twofold: to determine whether compact and useful \fd{}s could be produced in real application scenarios; and to empirically investigate the performance of the algorithms on inputs of varying sizes.
We implemented the algorithms described in Section~\ref{sec:Algorithms} using the Python programming language.
For the first objective, we considered the application of {\fd}s to practical problems in football analysis in order to evaluate their usefulness.
For the second objective, the algorithms were run on generated datasets of varying sizes to investigate the impact of different parameterisations on the computation time required to produce the \fd{} and the complexity of the \fd{} produced.

\subsection{Tactical Analysis in Football}
\label{sub:football-analysis}

Sports teams will apply tactics to improve their performance, and computational methods to detect, analyse and represent tactics have been the subject of several recent research efforts~\cite{bialkowski-2014,gudmundsson-2014,lucey-2013,vanhaaren-2015,wang-2015,wei-2013}. 
Two manifestations of team tactics are in persistent and repeated occurrence of spatial formations of players, and in \emph{plays} -- a coordinated sequence of actions by players.
We posited that \fd{}s would be a useful tool for compactly representing both these manifestations, and we describe the approaches used in this section.

The input for the experiments is a database containing player trajectory and event data from four home matches of the Arsenal Football Club from the $2007/08$ season, provided by Prozone Sports Limited~\cite{prozone}. 
For each player and match, there is a trajectory comprising a sequence of timestamped location points in the plane, sampled at \SI{10}{\Hz} and accurate to \SI{10}{cm}. 
In addition, for each match, there is a log of all the match events, comprising the timestamp and location of each event.

\subsubsection{Defensive Formations.}
\label{sub:defensive-formations}

The spatial formations of players in football matches are known to characterize a team's tactics~\cite{bialkowski-2014a}, and a compact representation of how formations change over time would be a useful tool for analysis.
We investigated whether a \fd{} could provide such a compact representation of the defensive formation of a team, specifically to show how the formation evolves during a phase of play.
The trajectories of the four defensive players were re-sampled at one-second intervals and used to compute a sequence of formation states which were then segmented to model the formation.



The criteria were derived from those presented by Kim~et~al.~\cite{kim2011}. The angles between pairs of adjacent players (along the defence line) were used to compute the formation criteria, see Fig~\ref{fig:formation_processing}. 
We extended this scheme to allow multiple criteria to be applied where the angle between pairs of players is close to $10^{\circ}$.
The reason for this was to facilitate compact results by allowing for smoothing of small variations in contiguous time-steps.
The criteria applied to each state is a triple $(x_1, x_2, x_3)$, computed as follows.
Given two player positions $p$ and $q$, let $\angle p q$ be the angle between $p$ and $q$ relative to the goal-line.
Let $R(-1) = [-90^{\circ}, -5^{\circ})$, $R(0) = (-15^{\circ}, +15^{\circ})$, and $R(+1) = (+5^{\circ}, +90^{\circ}]$ be three angular ranges.
The positions 
of the four defenders satisfy the criteria (and thus have the formation) $(x_1, x_2, x_3)$ if $\angle p_i p_{i+1} \in R(x_i)$ for all $i \in \{1,2,3\}$.


\begin{figure}[tb]
    \begin{center}
        \includegraphics[width=0.99\textwidth]{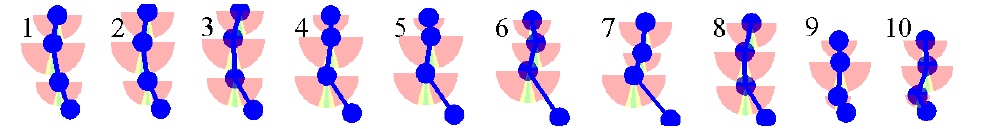}
        \includegraphics[width=0.8\textwidth]{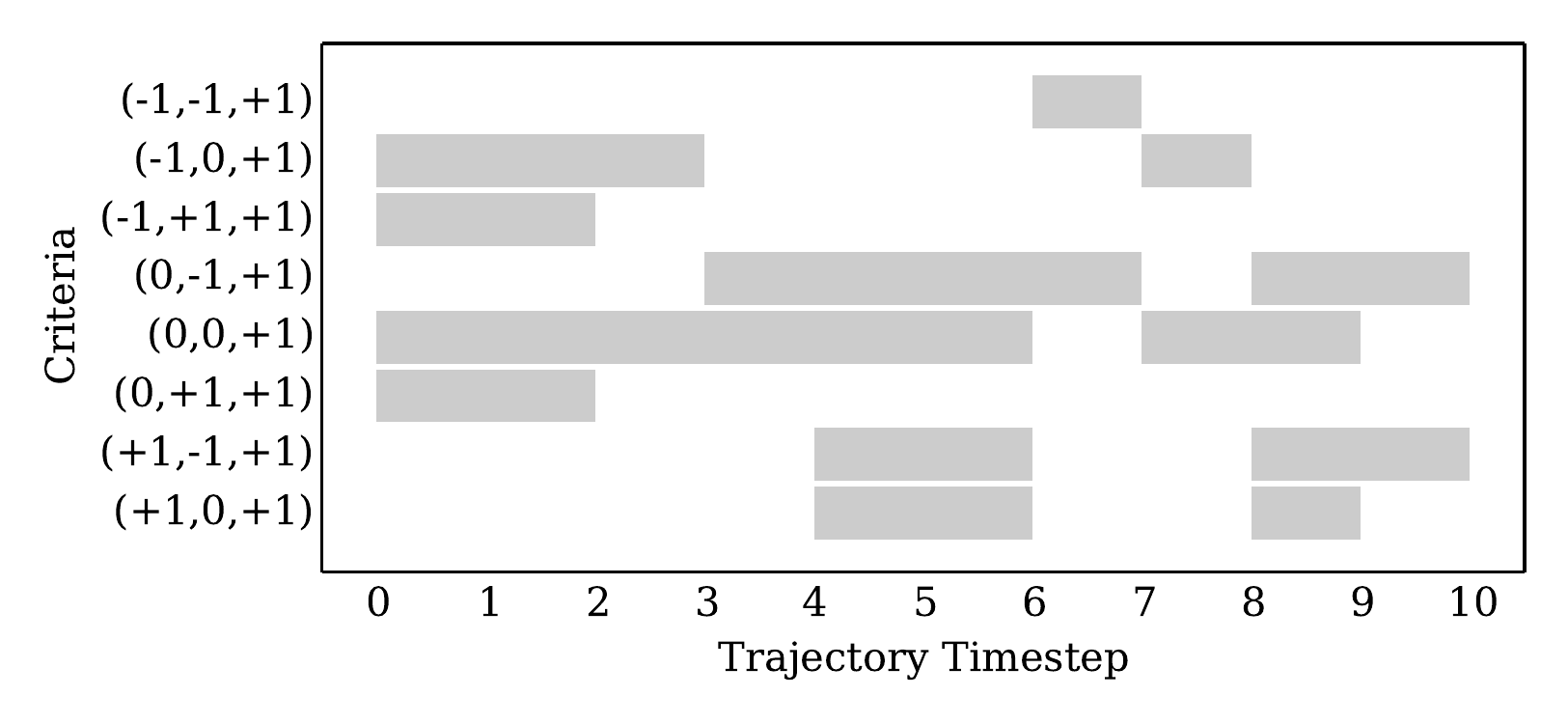}
        \caption{Segmentation of a single state sequence. The formation state sequence is used to compute the segmentation representation, where segments corresponding to criteria span the state sequence \emph{(bottom)}. The representation of this state sequence in the movement \fd{} is shaded in Fig.~\ref{fig:flow_diagram_defence_formation}.}
        \label{fig:formation_processing}
    \end{center}
\end{figure}

The criteria in this experiment were monotone decreasing and independent, and we ran the corresponding algorithm using randomly selected sets of the state sequences as input. The size $m$ of the input was increased until the running time exceeded a threshold of $6$ hours.
The algorithm successfully processed up to $m=12$ state sequences, having a total of $112$ assigned segments.
The resulting \fd{}, Fig.~\ref{fig:flow_diagram_defence_formation}, has a total complexity of $12$ nodes and $27$ edges. 

\begin{figure}[tb]
    \begin{center}
    \includegraphics[width=0.8\textwidth]{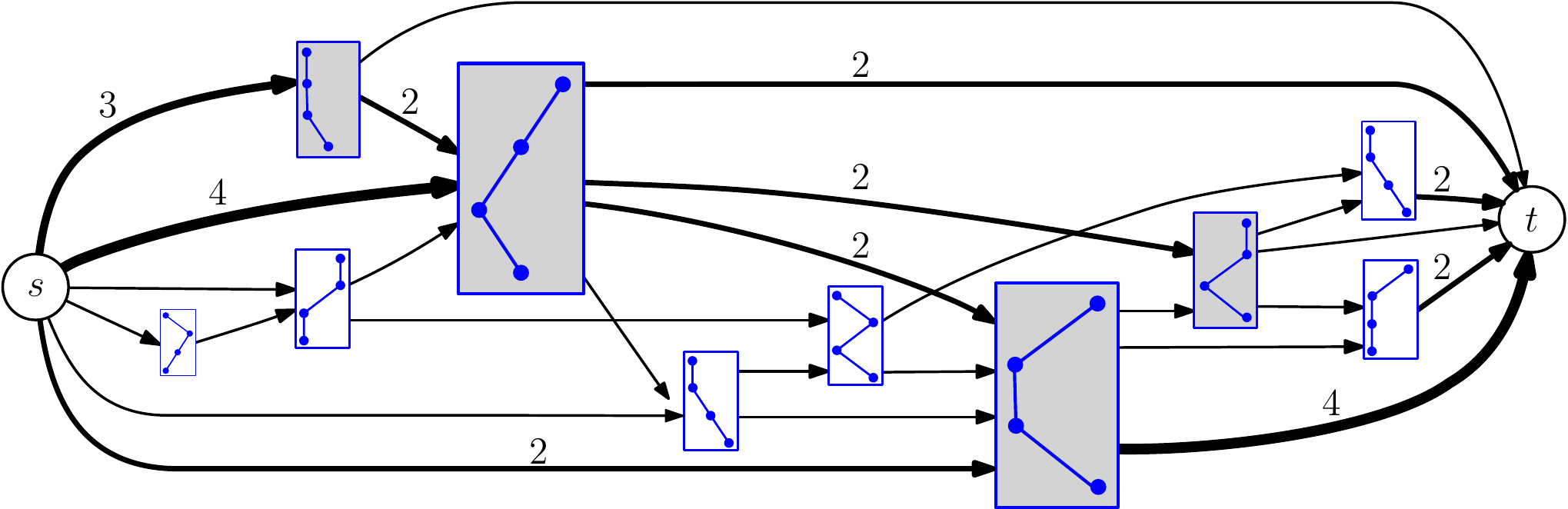}
    \caption{\Fd{} for formation morphologies of twelve defensive possessions. The shaded nodes are the segmentation of the state sequence in Fig.~\ref{fig:formation_processing}.}
    \label{fig:flow_diagram_defence_formation}
    \end{center}
\end{figure}

We believe that the \fd{} provides an intuitive summary of the defensive formation, and several observations are apparent.
There appears to be a preference amongst the teams for the right-back to position himself in advance of the right centre-half (i.e. the third component of the triple is $+1$).
Furthermore, the $(0,0,0)$ triple, corresponding to a ``flat back four'' is not present in the diagram. This is typically considered the optimal formation for teams that utilise the offside trap, and thus may suggest that the defences here are not employing this tactic.
These observations were apparent to the authors as laymen, and we would expect that a domain expert would be able to extract further useful insights from the \fd{}s.



\subsubsection{Attacking Plays.}
\label{sub:attacking-plays}

During a football match, the team in possession of the ball is attempting to reach a position where they can take a shot at goal.
Teams will typically use a variety of tactics to achieve such a position, e.g. teams can vary the intensity of an attack by pushing forward, moving laterally, making long passes, or retreating. 
We modelled attacking possessions as state sequences, segmented according to criteria representing the attacking intensity and tactics employed, and computed \fd{}s for the possessions.
In particular, we were interested in determining whether differences in tactics employed by teams when playing at home or away~\cite{bialkowski-2014} are apparent in the \fd{}s.

We focus on \emph{ball events}, where a player touches the ball, e.g. passes, touches, dribbles, headers, and shots at goal.
The event sequence for each match was divided into sub-sequences where a single team was in possession, and then filtered to include only those that end with a shot at goal. 

We defined criteria that characterised the movement of the ball - relative to the goal the team is attacking - between event states in the possession sequence. 
The applied criteria are defined as follows.
Let $x_i$, $y_i$, $t_i$ be the $x$-coordinate in metres, $y$-coordinate in metres and time-stamp in seconds, respectively, for event $i$. The velocity of the ball in the $x$-direction at event $i$, which is in the direction of the goal, is thus $x_v = (x_{i+1} - x_{i}) / (t_{i+i} - t_{i})$ in $m/s$.
The velocity $y_v$ of the ball in the $y$ direction is computed in a similar fashion, and together are used to specify the following criteria.

\begin{itemize}
    \item \emph{Backward movement (BM):} $x_v < 1$, a sub-sequence of passes or touches that move in a defensive direction.
    \item \emph{Lateral movement (LM):} $-5 < x_v < 5$, passes or touches that move in a lateral direction.
    \item \emph{Forward movement (FM):} $-1 < x_v < 12$, passes or touches that move in an attacking direction, at a velocity in the range achievable by humans, i.e. to approximately $10 m/s$.
    \item \emph{Fast forward movement (FFM):} $8 < x_v$, passes or touches moving in an attacking direction at a velocity generally in exceeds of maximum human velocity.
    \item \emph{Long ball (LB):} a pass travelling $30m$ in the attacking direction.
    \item \emph{Cross-field ball (CFB):} a pass travelling $20m$ in the cross-field direction, and that has angle in range $[80,100]$ or $[-80,-100]$.
    \item \emph{Shot resulting in goal (SG):} a successful shot resulting in a goal.
    \item \emph{Shot not resulting in goal (SNG):} an unsuccessful shot that does not produce a goal.
\end{itemize}


\begin{figure}[tb] 
    \centering
    \includegraphics[scale=0.23]{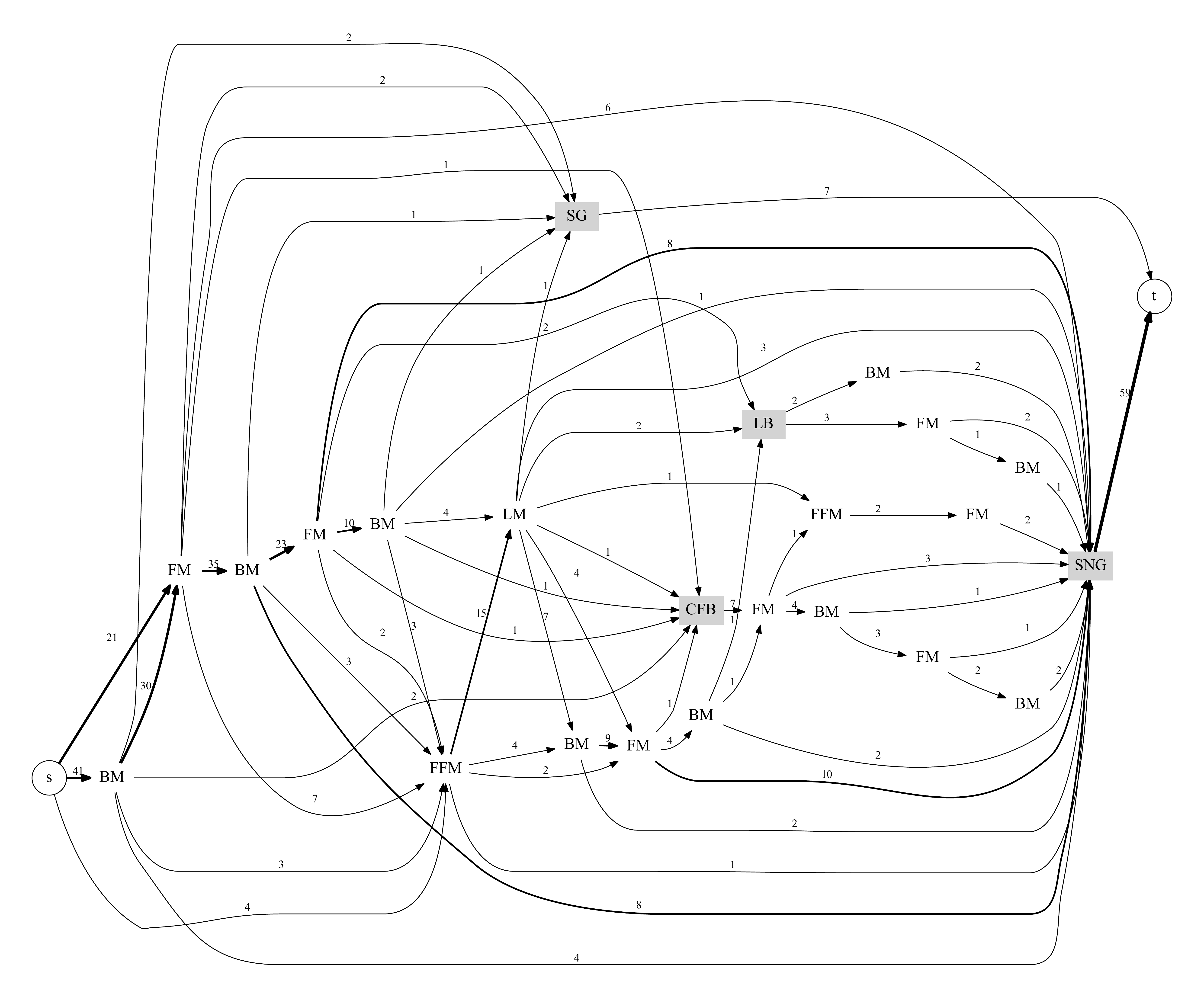}
    \caption{\Fd{} produced for the home team. The edge weights are the number of possessions that span the edge, and the nodes with grey background are event types that are significant, as defined in Section~\ref{sub:attacking-plays}.}
    \label{fig:flow-diagram-home-team}
\end{figure}
\begin{figure}[tb] 
    \centering
    \includegraphics[scale=0.23]{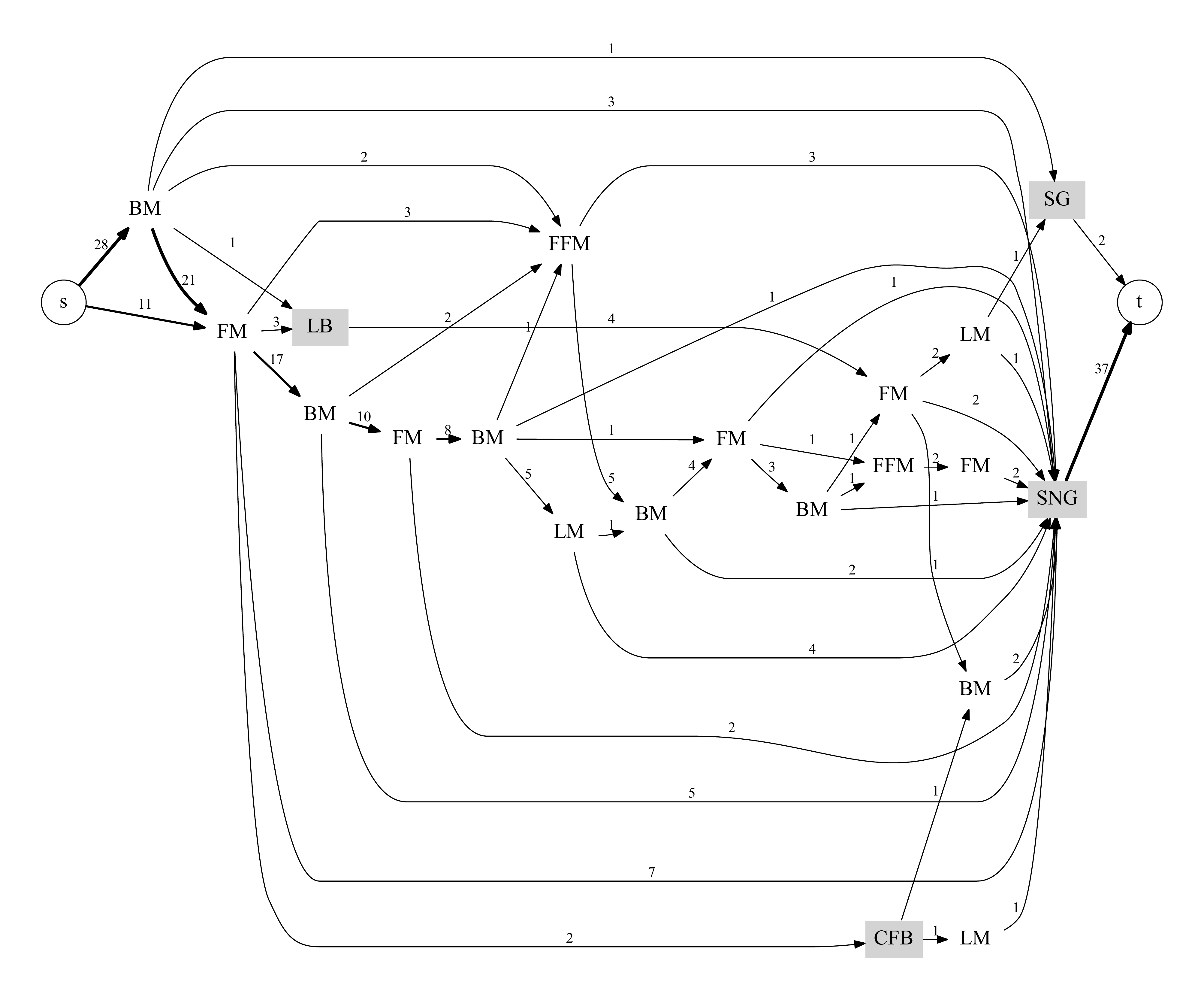}
    \caption{\Fd{} produced for the visiting team. The edge weights are the number of possessions that span the edge, and the nodes with grey background are event types that are significant, as defined in Section~\ref{sub:attacking-plays}.}
    \label{fig:flow-diagram-visiting-team}
\end{figure}

For a football analyst, the first four criteria are simple movements, and are not particularly interesting. 
The last four events are significant: the long ball and cross-field ball change the locus of attack; and the shot criteria represent the objective of an attack.

The possession state sequences for the home and visiting teams were segmented according to the criteria and the time-step heuristic algorithm was used to compute the \fd{}s.
The home-team input consisted of $66$ sequences covered by a total of $866$ segments, and resulted in a \fd{} with $25$ nodes and $65$ edges, see Fig.~\ref{fig:flow-diagram-home-team}. Similarly, the visiting-team input consisted of $39$ state sequences covered by $358$ segments and the output \fd{} complexity was $22$ nodes and $47$ edges, as shown in Fig.~\ref{fig:flow-diagram-visiting-team}.
        
%
%

At first glance, the differences between these \fd{}s may be difficult to appreciate, however closer inspection reveals several interesting observations.
The $s$--$t$ paths in the home-team \fd{} tend to be longer than those in the visiting team's, suggesting that the home team tends to retain possession of the ball for longer, and varies the intensity of attack more often.
Moreover, the nodes for cross-field passes and long-ball passes tend to occur earlier in the $s$--$t$ paths in the visiting team's \fd{}.
These are both useful tactics as they alter the locus of attack, however they also carry a higher risk.
This suggests that the home team is more confident in its ability to maintain possession for long attack possessions, and will only resort to such risky tactics later in a possession.
Furthermore, the tactics used by the team in possession are also impacted by the defensive tactics.
As Bialkowski et al~\cite{bialkowski-2014} found, visiting teams tend to set their defence deeper, i.e. closer to the goal they are defending.
When the visiting team is in possession, there is thus likely to be more space behind the home team's defensive line, and the long ball may appear to be a more appealing tactic.
The observations made from these are consistent with our basic understanding of football tactics, and suggest that the \fd{}s are interpretable in this application domain.


\subsection{Performance Testing}

In the second experiment, we used a generator that outputs synthetic state sequences and segmentations, and tested the performance of the algorithms on inputs of varying sizes.

The segmentations were generated using Markov Chain Monte Carlo sampling.
Nodes representing the criteria set of size $k$ were arranged in a ring and a Markov chain constructed, such that each node had a transition probability of $0.7$ to remain at the node, $0.1$ to move to the adjacent node, and $0.05$ to move to the node two places away.
Segmentations were computed by sampling the Markov chain starting at a random node. 
Thus, simulated datasets of arbitrary size $m$, state sequence length $n$, criteria set size $k$ were generated.

We performed two tests on the generated segmentations. 
In the first, experiments were run on the four algorithms described in Section~\ref{sec:Algorithms} with varying configurations of $m$, $n$ and $k$ to investigate the impact of input size on the algorithm's performance.
The evaluation metric used was the CPU time required to generate the \fd{} for the input.
In the second test, we compared the total complexity of the output \fd{} produced by the two heuristic algorithms with the baseline complexity of the \fd{} produced by the exact algorithm for monotone increasing and independent criteria.

We repeated each experiment five times with different input sequences for each trial, and the results presented are the mean values of the metrics over the trials.
Limits were set such that the process was terminated if the CPU time exceeded $1$ hour, or the memory required exceeded $8$GB.


%
The results of the first test showed empirically that the exact algorithms have time and storage complexity consistent with the theoretical worst-case bounds, Fig.~\ref{fig:execution-metrics}~\emph{(top)}. 
The heuristic algorithms were subsequently run against larger test data sets to examine the practical limits of the input sizes, and were able to process larger input -- for example, an input of $k=128$, $m=32$ and $n=1024$ was tractable -- although the cost is that the resulting \fd{}s were suboptimal, but correct, in terms of their total complexity.

For the second test, we investigated the complexity of the \fd{} induced by inputs of varying parameterisations when using the heuristic algorithms. 
The objective was to examine how close the complexity was to the optimal complexity produced using an exact algorithm.
The inputs exhibited monotone decreasing and independent criteria, and thus the corresponding algorithm was used to produce the baseline.
Fig.~\ref{fig:execution-metrics}~\emph{(bottom)} summarises the results for varying input parameterisations. 
The complexity of the \fd{}s produced by the two heuristic algorithms are broadly similar, and increase at worst linearly as the input size increases.
Moreover, while the complexity is not optimal it appears to remain within a constant factor of the optimal, suggesting that the heuristic algorithms could produce usable \fd{}s for inputs where the exact algorithms are not tractable.

\begin{figure}[t!]
    \centering
    \begin{subfigure}[t]{0.32\textwidth}
        \includegraphics[width=1.0\textwidth]{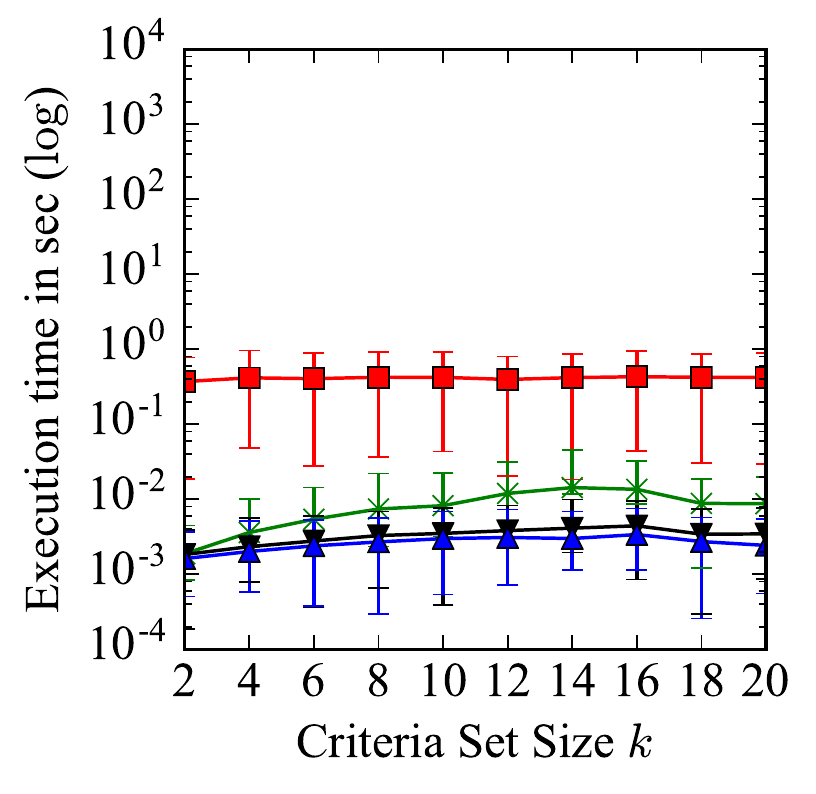}
    \end{subfigure}
    \begin{subfigure}[t]{0.32\textwidth}
        \includegraphics[width=1.0\textwidth]{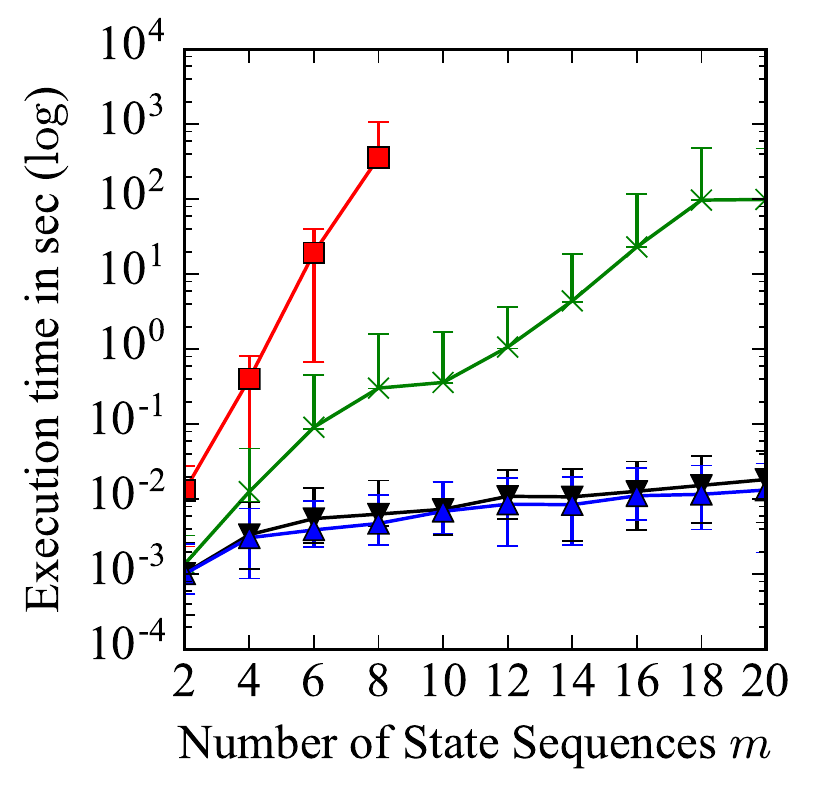}
    \end{subfigure}
    \begin{subfigure}[t]{0.32\textwidth}
        \includegraphics[width=1.0\textwidth]{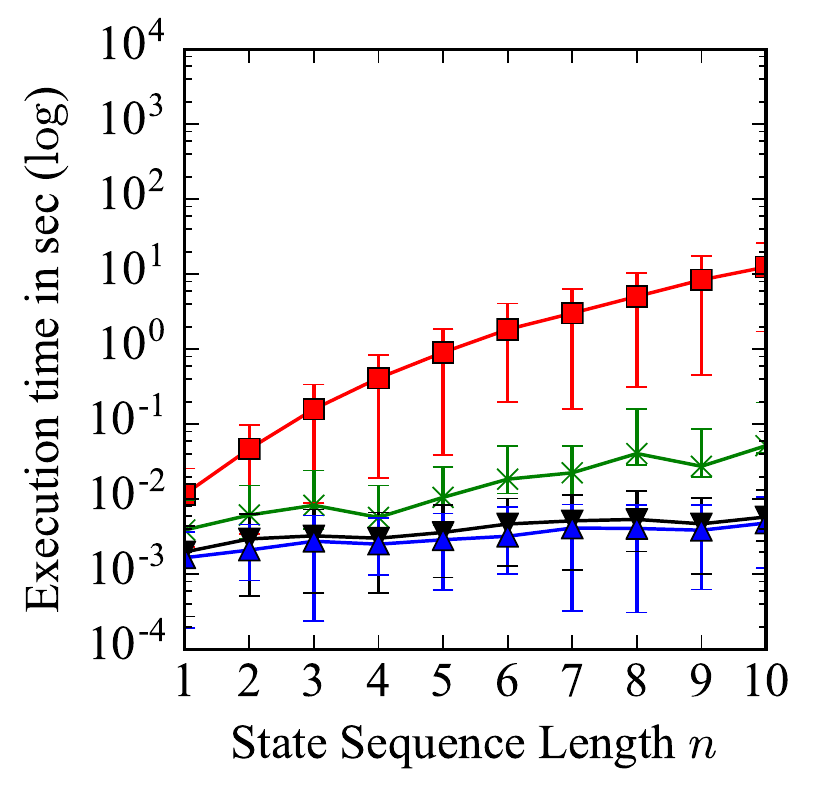}
    \end{subfigure}

    \begin{subfigure}[b]{1.0\textwidth}
        \centering
        \includegraphics[width=0.7\textwidth]{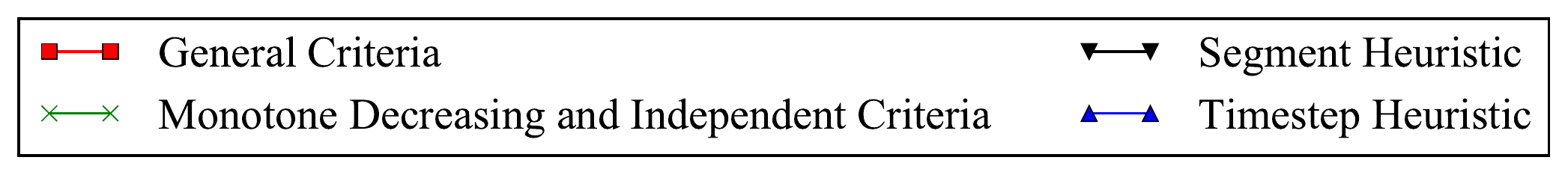}
    \end{subfigure}

    \begin{subfigure}[t]{0.32\textwidth}
        \includegraphics[width=1.0\textwidth]{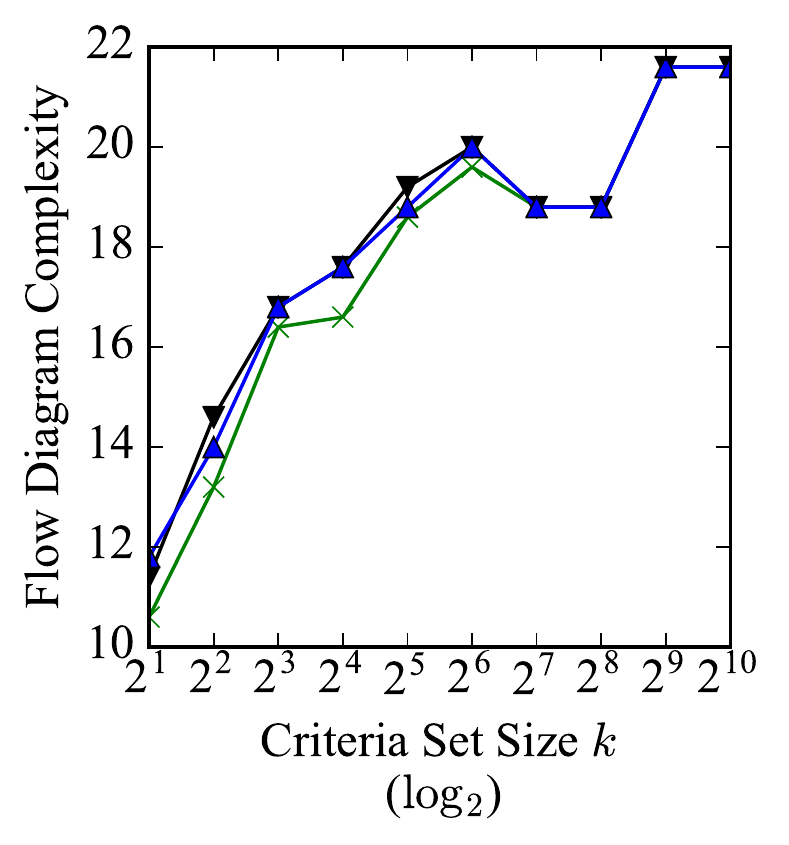}
    \end{subfigure}
    \begin{subfigure}[t]{0.32\textwidth}
        \includegraphics[width=1.0\textwidth]{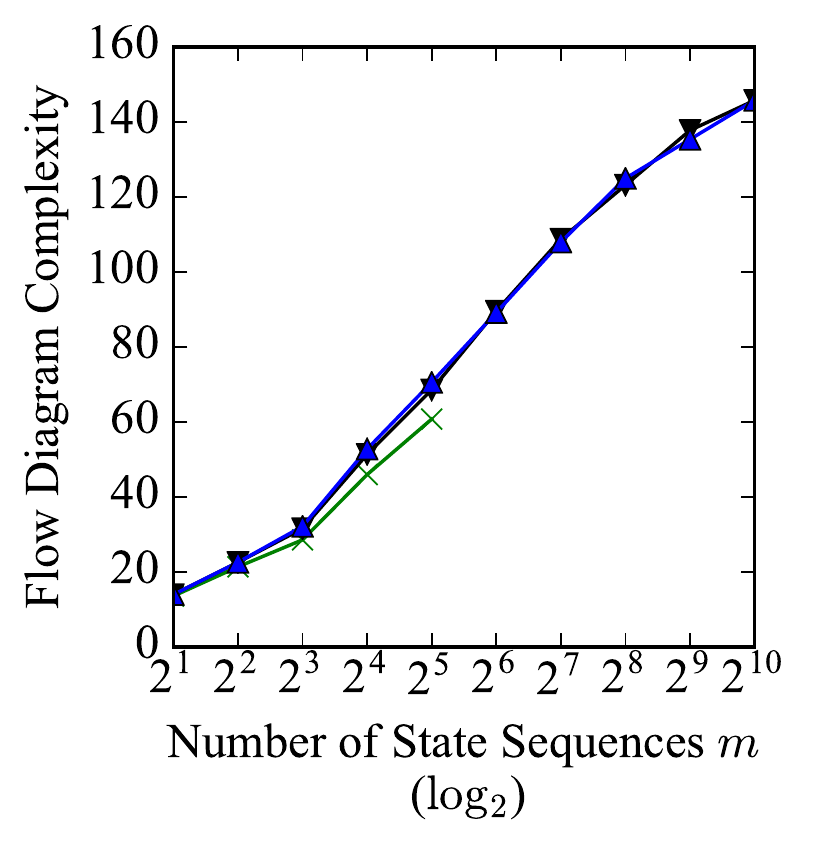}
    \end{subfigure}
    \begin{subfigure}[t]{0.32\textwidth}
        \includegraphics[width=1.0\textwidth]{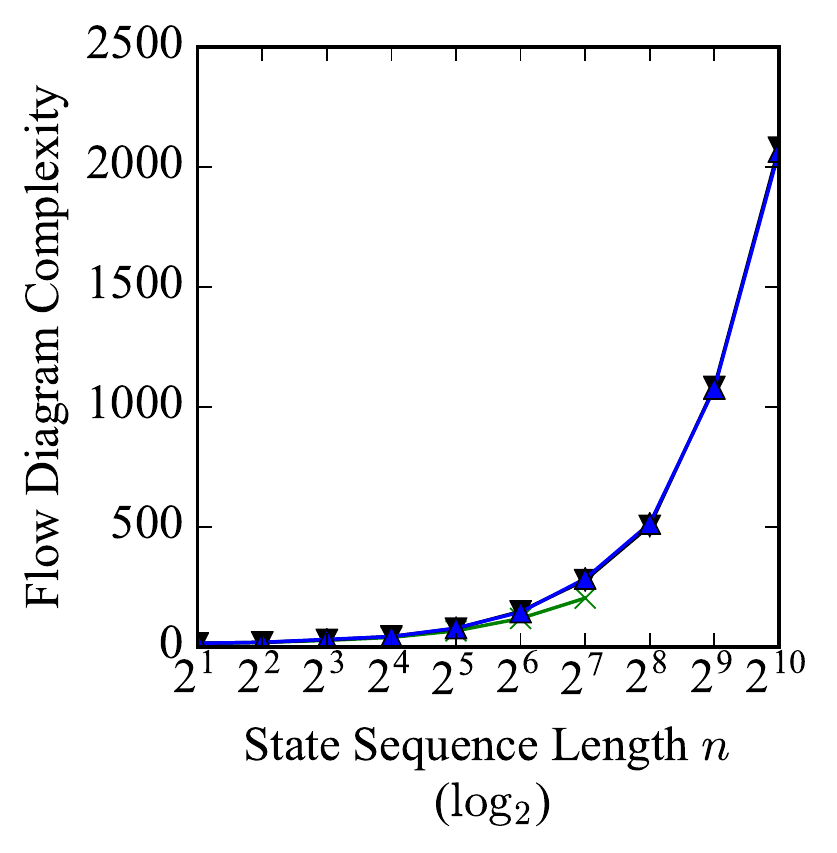}
    \end{subfigure}

    \caption{Runtime statistics for generating \protect{\fd{}} \emph{(top)}, and total complexity of flow diagrams produced \emph{(bottom)}. Default values of $m = 4$, $n = 4$ and $k = 10$ were used. The data points are the mean value and the error bars delimit the range of values over the five trials run for each input size.}
    \label{fig:execution-metrics}
\end{figure}

\section{Concluding Remarks}
We introduced \fd{}s as a compact representation of a large number of state
sequences.
We argued that this representation gives an intuitive summary allowing the user
to detect patterns among large sets of state sequences, and gave several
algorithms depending on the properties of the segmentation criteria. 
These algorithms only run in polynomial time if the number of state sequences $m$ is constant, which is the best we can hope for given the problem is $W[1]$-hard.
As a result we considered two heuristics capable of processing large data sets in reasonable time, however we were unable to give an approximation bound.
We tested the algorithms experimentally to assess the utility of the \fd{} representation in a sports analysis context, and also analysed the performance of the algorithms of inputs of varying parameterisations.

 \bibliographystyle{plain}
 \bibliography{flow-diagrams}

\newcommand{\SortNoop}[1]{}
\begin{thebibliography}{10}

\bibitem{abbkkw-tssc-14}
S.~P.~A. Alewijnse, K.~Buchin, M.~Buchin, A.~K{\"o}lzsch, H.~Kruckenberg, and
  M.~Westenberg.
\newblock A framework for trajectory segmentation by stable criteria.
\newblock In {\em Proc. 22nd ACM SIGSPATIAL/GIS}, pages 351--360. ACM, 2014.

\bibitem{adkls-stnmc-13}
B.~Aronov, A.~Driemel, M.~J. van Kreveld, M.~L{\"o}ffler, and F.~Staals.
\newblock Segmentation of trajectories for non-monotone criteria.
\newblock In {\em Proc. 24th ACM-SIAM SODA}, pages 1897--1911, 2013.

\bibitem{bialkowski-2014a}
A.~Bialkowski, P.~Lucey, G.~P.~K. Carr, Y.~Yue, S.~Sridharan, and I.~Matthews.
\newblock Identifying team style in soccer using formations learned from
  spatiotemporal tracking data.
\newblock In {\em {ICDM} Workshops}, pages 9--14. {IEEE}, 2014.

\bibitem{bialkowski-2014}
A.~Bialkowski, P.~Lucey, P.~Carr, Y.~Yue, and I.~Matthews.
\newblock Win at home and draw away: automatic formation analysis highlighting
  the differences in home and away team behaviors.
\newblock In {\em Proc. 8th Annual MIT Sloan Sports Analytics Conference},
  2014.

\bibitem{bbgll-dcpcs-11}
K.~Buchin, M.~Buchin, J.~Gudmundsson, M.~L{\"{o}}ffler, and J.~Luo.
\newblock Detecting commuting patterns by clustering subtrajectories.
\newblock {\em Int. J. Comput. Geometry Appl.}, 21(3):253--282, 2011.

\bibitem{bbkss-tgs-13}
K.~Buchin, M.~Buchin, M.~J. van Kreveld, B.~Speckmann, and F.~Staals.
\newblock Trajectory grouping structure.
\newblock In {\em Proc. 13th WADS}, pages 219--230, 2013.

\bibitem{bdks-stfa-11}
M.~Buchin, A.~Driemel, M.~van Kreveld, and V.~Sacristan.
\newblock Segmenting trajectories: A framework and algorithms using
  spatiotemporal criteria.
\newblock {\em Journal of Spatial Information Science}, 3:33--63, 2011.

\bibitem{bkk-stmb-12}
M.~Buchin, H.~Kruckenberg, and A.~K{\"o}lzsch.
\newblock Segmenting trajectories based on movement states.
\newblock In {\em Proc. 15th SDH}, pages 15--25. Springer-Verlag, 2012.

\bibitem{cwt-stdrd-06}
H.~Cao, O.~Wolfson, and G.~Trajcevski.
\newblock Spatio-temporal data reduction with deterministic error bounds.
\newblock {\em The VLDB Journal}, 15(3):211--228, 2006.

\bibitem{fi-aascs-95}
C.~B. Fraser and R.~W. Irving.
\newblock Approximation algorithms for the shortest common supersequence.
\newblock {\em Nordic Journal of Computing}, 2(3):303--325, 1995.

\bibitem{gudmundsson-2014}
J.~Gudmundsson and T.~Wolle.
\newblock {Football analysis using spatio-temporal tools}.
\newblock {\em Computers, Environment and Urban Systems}, 47:16--27, 2014.

\bibitem{hjzs-scabs-11}
C.-S. Han, S.-X. Jia, L.~Zhang, and C.-C. Shu.
\newblock Sub-trajectory clustering algorithm based on speed restriction.
\newblock {\em Computer Engineering}, 37(7), 2011.

\bibitem{kim2011}
H.-C. Kim, O.~Kwon, and K.-J. Li.
\newblock Spatial and spatiotemporal analysis of soccer.
\newblock In {\em Proc. 19th ACM SIGSPATIAL/GIS}, pages 385--388. ACM, 2011.

\bibitem{lucey-2013}
P.~Lucey, A.~Bialkowski, G.~P.~K. Carr, S.~Morgan, I.~Matthews, and Y.~Sheikh.
\newblock {Representing and Discovering Adversarial Team Behaviors Using Player
  Roles}.
\newblock In {\em Proc. IEEE Conference on Computer Vision and Pattern
  Recognition (CVPR'13)}, pages 2706--2713, Portland, OR, jun 2013. IEEE.

\bibitem{ly-hamp-93}
C.~Lund and M.~Yannakakis.
\newblock On the hardness of approximating minimization problems.
\newblock In {\em Proc. 25th ACM STOC}, pages 286--293. ACM, 1993.

\bibitem{p-pcfas-03}
K.~Pietrzak.
\newblock On the parameterized complexity of the fixed alphabet shortest common
  supersequence and longest common subsequence problems.
\newblock {\em Journal of Computer and System Sciences}, 67(4):757 -- 771,
  2003.

\bibitem{prozone}
{Prozone Sports Ltd}.
\newblock {Prozone Sports - Our technology}.
\newblock \url{http://prozonesports.stats.com/about/technology/}, 2015.

\bibitem{ru-scspb-81}
K.-J. Räihä and E.~Ukkonen.
\newblock The shortest common supersequence problem over binary alphabet is
  {NP}-complete.
\newblock {\em Theoretical Computer Sci.}, 16(2):187 -- 198, 1981.

\bibitem{vanhaaren-2015}
J.~{Van Haaren}, V.~Dzyuba, S.~Hannosset, and J.~Davis.
\newblock {Automatically Discovering Offensive Patterns in Soccer Match Data}.
\newblock In {\em Advances in Intelligent Data Analysis {XIV} - 14th
  International Symposium, {IDA} 2015}, volume 9385 of {\em Lecture Notes in
  Computer Science}, pages 286--297, Saint Etienne, oct 2015. Springer.

\bibitem{wang-2015}
Q.~Wang, H.~Zhu, W.~Hu, Z.~Shen, and Y.~Yao.
\newblock {Discerning Tactical Patterns for Professional Soccer Teams}.
\newblock In {\em Proceedings of the 21th ACM SIGKDD International Conference
  on Knowledge Discovery and Data Mining - KDD '15}, pages 2197--2206, Sydney,
  aug 2015. ACM Press.

\bibitem{wei-2013}
X.~Wei, L.~Sha, P.~Lucey, S.~Morgan, and S.~Sridharan.
\newblock {Large-Scale Analysis of Formations in Soccer}.
\newblock In {\em 2013 International Conference on Digital Image Computing:
  Techniques and Applications (DICTA)}, pages 1--8, Hobart, nov 2013. IEEE.

\end{thebibliography}

\newpage
\appendix

\end{document}